\newcommand{\Adj}{\text{Adj}}
\newcommand{\D}{\mathsf D}
\newcommand{\R}{\mathsf R}
\newcommand{\Prob}{\mathbb P}
\newcommand{\RN}[1]{%
  \textup{\uppercase\expandafter{\romannumeral#1}}%
}
\newcommand{\subhead}[1]{\vspace {0.04in}\noindent{\textbf{#1.}}}
\newcommand{\DP}{differential privacy }
\newcommand\norm[1]{\left\lVert#1\right\rVert}
\newcommand\ceil[1]{\lceil#1\rceil}
\newtheorem{defn}{Definition}[section]
\newtheorem{thm}{Theorem}[section]
\newtheorem{lem}[thm]{Lemma}
\def\BibTeX{{\rm B\kern-.05em{\sc i\kern-.025em b}\kern-.08em
    T\kern-.1667em\lower.7ex\hbox{E}\kern-.125emX}}
\begin{document}

\title{\title{\huge {\it DPOAD}: {\it D}ifferentially {\it P}rivate {\it O}utsourcing of {\it A}nomaly {\it D}etection through Iterative Sensitivity Learning}
\thanks{Identify applicable funding agency here. If none, delete this.}
}
\author{\IEEEauthorblockN{Meisam Mohammady\IEEEauthorrefmark{1}, Han Wang \IEEEauthorrefmark{2}, Lingyu Wang \IEEEauthorrefmark{3}, Mengyuan Zhang \IEEEauthorrefmark{4}}
\IEEEauthorblockN{Yosr Jarraya \IEEEauthorrefmark{5}, Suryadipta Majumdar \IEEEauthorrefmark{3}, Makan Pourzandi \IEEEauthorrefmark{5}, Mourad Debbabi \IEEEauthorrefmark{3}, Yuan Hong  \IEEEauthorrefmark{6}}
\IEEEauthorblockA{ \IEEEauthorrefmark{1} Data61, CSIRO, Cyber Security CRC\\
\IEEEauthorrefmark{2} Illinois Institute of Technology, \IEEEauthorrefmark{3} Concordia University\\ \IEEEauthorrefmark{4} Hong Kong Polytechnic University, \IEEEauthorrefmark{5} Ericsson Security Research, Ericsson Canada, \IEEEauthorrefmark{6} University of Connecticut\\
meisam.mohammady@csiro.au, hwang185@hawk.iit.edu, yuan.hong@uconn.edu, \{wang, suryadipta.majumdar, debbabi\}@concordia.ca} mengyuan.zhang@polyu.edu.hk, \{yosr.jarraya, makan.pourzandi\}@ericsson.com
}
\maketitle

\begin{abstract}
Outsourcing anomaly detection to third-parties can allow data owners to overcome resource constraints (e.g., in lightweight IoT devices), facilitate collaborative analysis (e.g., under distributed or multi-party scenarios), and benefit from lower costs and specialized expertise (e.g., of Managed Security Service Providers). Despite such benefits, a data owner may feel reluctant to outsource anomaly detection without sufficient privacy protection. To that end, most existing privacy solutions would face a novel challenge, i.e., preserving privacy usually requires the difference between data entries to be eliminated or reduced, whereas anomaly detection critically depends on that difference. Such a conflict is recently resolved under a local analysis setting with trusted analysts (where no outsourcing is involved) through moving the focus of differential privacy (DP) guarantee from ``all'' to only ``benign'' entries. In this paper, we observe that such an approach is not directly applicable to the outsourcing setting, because data owners do not know which entries are ``benign'' prior to outsourcing, and hence cannot selectively apply DP on data entries. Therefore, we propose a novel \emph{iterative} solution for the data owner to gradually ``disentangle'' the anomalous entries from the benign ones such that the third-party analyst can produce accurate anomaly results with sufficient DP guarantee. We design and implement our {\it D}ifferentially {\it P}rivate {\it O}utsourcing of {\it A}nomaly {\it D}etection ({\it DPOAD}) framework, and demonstrate its benefits over baseline Laplace and PainFree mechanisms through experiments with real data from different application domains. 
\end{abstract}

\begin{IEEEkeywords}
component, formatting, style, styling, insert
\end{IEEEkeywords}

\maketitle
\thispagestyle{empty}
 \section{Introduction}
\label{intro4}

The recent worldwide surge in the adoption of cloud/edge computing necessitated by the lockdown and remote working would likely sustain even post-pandemic due to the apparent benefits of outsourcing computing tasks to clouds, such as the pervasive access, increased flexibility and efficiency, and reduced costs~\footnote{For example, Microsoft said it had seen ``two years of digital transformation in two months'' as its customers started adopting cloud solutions, and Gartner projected that the worldwide spending on public cloud services would grow by 18.4\% in 2021 to a total of \$304.9 billion.}. In such a context, there is also an increasing need for outsourcing security analyses, such as anomaly detection, to third parties hosted in clouds, e.g., Managed Security Service Providers (MSSPs)~\cite{10.1145/1107622.1107648}. Outsourcing security analyses to clouds offers several benefits. First, it can allow light-weight distributed systems (especially in IoT) to overcome their resource constraints (e.g., limited computational power, storage, or battery of lightweight smart devices). Second, it can enable collaborative analysis that involves multi-party data sharing (e.g., detection of a common malware infection across different smart homes). Third, it can bring various benefits of hiring an MSSP, such as cost effectiveness, updated technologies, and better trained and specialized experts with knowledge about a diverse range of clients and security vulnerabilities. 

However, most data owners would be reluctant to  outsource their security analyses without sufficient privacy protection due to the sensitive nature of such analyses~\cite{risk}. 
In particular, outsourcing anomaly detection raises an interesting and unique challenge to existing privacy solutions. On one hand, most outsourcing and collaborative approaches (e.g.,~\cite{nguyen2019diot}) do not allow outsourcing anomaly detection, and may sacrifice privacy guarantee for accuracy . On the other hand, existing privacy measures, such as differential privacy (DP) (which has been widely recognized as the state-of-the-art privacy notion~\cite{10.1007/978-3-540-79228-4_1,10.1007/11681878_14}), often require the differences between data objects to be eliminated or reduced (e.g., DP aims to generate indistinguishable outputs), whereas anomaly detection critically relies on such differences to identify outliers. 
In particular, Hafiz et al.~\cite{DBLP:conf/cscml/BittnerSW18,DBLP:conf/ccs/AsifPV19}) show that relatively strong privacy guarantees render the analysis results useless (one false positive in each three detected anomalies). 

Such a ``conflict'' between anomaly detection and privacy is recently resolved under a \emph{local analysis setting} where no outsourcing is involved and the analyst is trusted to know the distribution of data. The key idea behind those solutions, namely the ``anomaly-restricted'' strategy, is to only preserve the privacy for benign records (instead of all records, as in a normal application of DP). Unfortunately, such a strategy is no longer applicable when it comes to the \emph{outsourced analysis setting} where the (third-party) analyst is not trusted to know the data distribution, whereas the data owner does not know which entries are ``benign”, and hence cannot selectively apply DP on data entries prior to outsourcing. In particular, Figure~\ref{fig:1} compares the system models of the \emph{local analysis setting}, as addressed in existing works~\cite{DBLP:conf/cscml/BittnerSW18, bohler2017privacy,DBLP:conf/pkdd/OkadaFS15,DBLP:conf/ccs/AsifPV19, DBLP:conf/coinco/AsifTVSA16} (top), and the \emph{outsourced analysis setting} studied in this work. 

\begin{itemize}
\item In the first model (\emph{local analysis setting}), the data owner (who is also the analyst) is a trusted data center (e.g., a Bank) who both applies DP to the data and performs anomaly detection. On the other hand, the external user is untrusted (for seeing the original data), and can only access DP-protected detection results. Under such a model, we can see that the trusted data center can easily apply the anomaly-restricted solutions~\cite{DBLP:conf/cscml/BittnerSW18,DBLP:conf/ccs/AsifPV19} by selectively applying DP on only the benign records, as it has access to the original data.

\item The second model (\emph{outsourced analysis setting}) is similar to the first model, except that the data owner and analyst are now decoupled due to outsourcing, e.g., the data owner is a lightweight IoT device who is trusted but incapable of performing anomaly detection by itself, whereas the analyst is an untrusted (for seeing the original data) cloud service provider to whom the detection has been outsourced. 
Under such a model, we can see that the data owner can no longer apply the anomaly-restricted solutions~\cite{DBLP:conf/cscml/BittnerSW18,DBLP:conf/ccs/AsifPV19}, as it does not know which records are benign before sending the data to the analyst. On the other hand, the data owner cannot send the data without privacy protection (due to untrusted analyst), or simply apply DP on all records (which could render the records indistinguishable to anomaly detection). 
\end{itemize}

Therefore, the outsourced analysis setting creates a unique challenge to privacy protection. Specifically, we need to resolve the cyclic dependencies between the following research questions:
\begin{inparaenum}[(i)]
\item How can the analyst provide the data owner with enough information, e.g., an approximate likelihood of outlierness, to facilitate the anomaly-restricted DP solutions~\cite{DBLP:conf/cscml/BittnerSW18,DBLP:conf/ccs/AsifPV19}.   
\item How can the data owner provide an (honest but untrusted) analyst with a DP-protected version of its data that can lead to sufficiently accurate anomaly detection.
\end{inparaenum}
\begin{figure}[!h]
	\centering		\includegraphics[angle=0, width=1\linewidth]{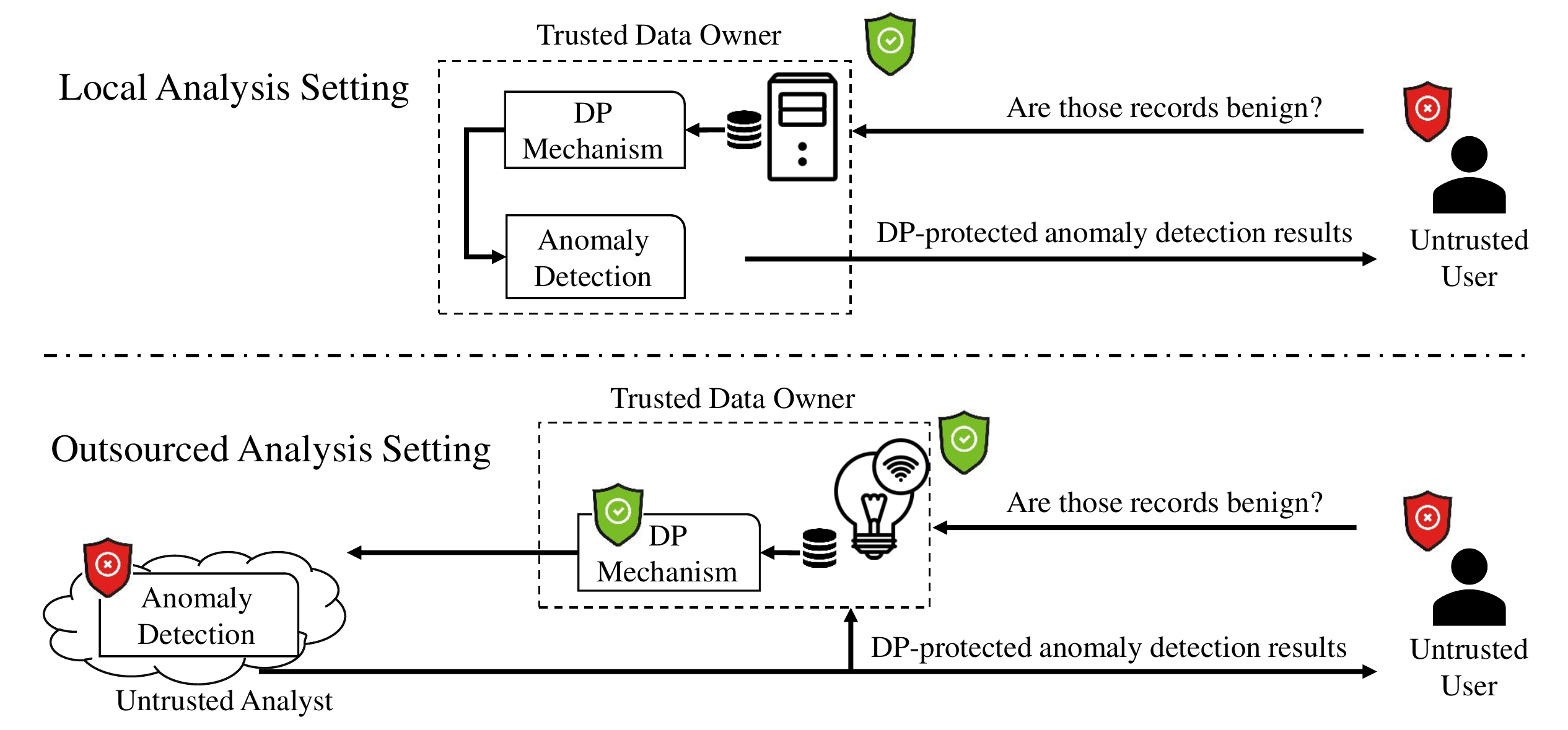}
\caption{System models of existing works~\cite{DBLP:conf/cscml/BittnerSW18, bohler2017privacy,DBLP:conf/pkdd/OkadaFS15,DBLP:conf/ccs/AsifPV19, DBLP:conf/coinco/AsifTVSA16} (top), and of DPOAD (down)}
	\label{fig:1}
\end{figure}

\subhead{Main Idea} 
To address the aforementioned challenges, our main idea is an iterative approach for the data owner and analyst to gradually ``disentangle'' the anomaly entries from the benign ones. Intuitively, by representing the DP sensitivity as ``water level'' to hide people's heights, Figure~\ref{fig:3} depicts the two major phases of our approach, i.e., the \emph{learning} phase (top), and the \emph{prediction} phase (bottom). First, in the \emph{learning} phase, the data owner starts by sending to the analyst IID data excerpts protected under regular DP guarantees. Note an analyst who knows the shape of the pool and the DP guarantee of the top portion of the figure ``can'' infer a good guess of the distribution of the heights yet ``cannot'' accurately conduct anomaly detection.

Analyst then using all the DP data excerpts, computes a sufficiently accurate estimation of the distribution of the underlying dataset (to be used for effectively disentangling anomalous records from the benign ones) and sends it to the data owner. Second, in the \emph{prediction} phase (bottom of Figure~\ref{fig:3}), the data owner, using the approximated learnt PDF, disentangles the most likely benign records from the suspicious ones, and focuses the DP mechanism attention on them. To this end the accuracy of anomaly detection will be significantly boosted through ``entangling'' the privacy of data records with the learnt PDF. 


\begin{figure}[ht] \centering
	\includegraphics[width=1\linewidth, viewport=0 325 640 805,clip]{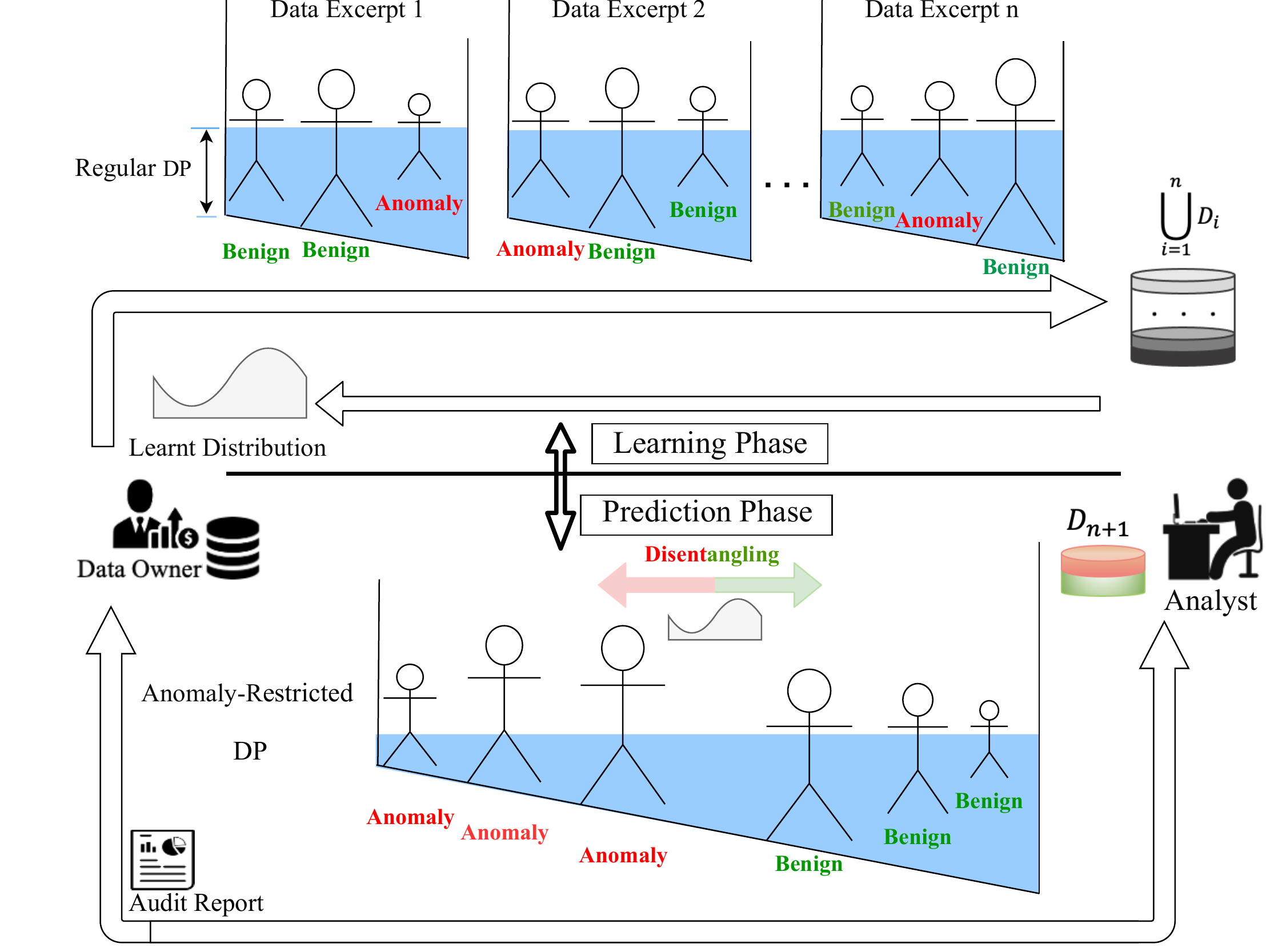}
\caption{The main idea of DPOAD which comprises learning and prediction phases. The water height denotes the sensitivity of DP mechanism to hide people's height. The learnt PDF is used to disentangle (likely) benign records from (likely) anomalous ones.}\vspace{-0.15in}
\label{fig:3}
\end{figure}


\subhead{Contributions} 
We design our {\it D}ifferentially {\it P}rivate {\it O}utsourcing of {\it A}nomaly {\it D}etection (\emph{DPOAD}) framework for this iterative approach in Section~\ref{sec:system} and theoretically analyze its privacy guarantee in Section~\ref{sec5}. Our analysis shows that this mechanism satisfies a popular version of DP called the Random Differential Privacy~\cite{Hall_Wasserman_Rinaldo_2013} which is widely used in protecting   datasets with unbounded/unknown sensitivities (similar to anomaly included datasets). In particular, we show that with DPOAD preserves $\epsilon$-DP of benign (anomalous) records with a probability close to (equivalently far from) one.
Furthermore, we formally benchmark DPOAD with the Laplace and Pain-Free mechanisms on different anomaly detection applications such as network, parking monitoring, IoT and credit card anomaly (fraudulent) detection. Finally, we experimentally evaluate the accuracy improvements in anomaly detection under DPOAD in comparison to the baselines Laplace mechanism and the Pain-Free mechanism~\cite{pmlr-v70-rubinstein17a}. In summary, our main contributions are as follows:
\begin{itemize}
\item As per our knowledge, DPOAD is the first approach that addresses the \emph{outsourced} version of the long-standing conflict between differential privacy and anomaly detection. This is especially relevant today as outsourcing computing tasks to clouds has seen a global surge lately. To that end, DPOAD introduces a novel iterative approach that allows the data owner and analyst to collectively find an optimal differential privacy setting under which the accuracy of anomaly detection can be maximized while guaranteeing privacy.

\item DPOAD tackles various technical challenges pertaining to privacy-preserving anomaly detection in the outsourced setting, including (a) learning data distribution from orthodox DP excerpts of the dataset (which are not directly suitable for anomaly detection), (b) guaranteeing an iterative version of DP, called RDP~\cite{hall2013random}, by sampling the sensitivity, first uniformly from the domain of the data (the learning phase), and then from the learnt distribution (the prediction phase), (c) further boosting the accuracy through refining the estimated distribution of the data over time using the computed anomaly scores, and leveraging this distribution into the sensitivity sampler.

\item Our experimental results demonstrate that DPOAD significantly improves the accuracy of  anomaly detection compared to the Laplace mechanism and the Pain-Free solution~\cite{pmlr-v70-rubinstein17a}). The improvement rate varies between [0-100]\% for the parking dataset, [100-500]\% for the credit card clients dataset, and [300-600]\% for network and building monitoring dataset.
\end{itemize}

The rest of the paper is organized as follows. Section~\ref{sec:model} provides necessary background, and Section \ref{sec:system} defines the DPOAD framework. Section~\ref{sec5} analyzes privacy, utility and efficiency advantage of DPOAD. Section~\ref{sec:exp} details the experiments. Section~\ref{relat} reviews the related work, and Section~\ref{sec:conclusion} concludes the paper.

\section{Background} 
\label{sec:model}

We review some notion and notations, as well as some required background on differential privacy for the theoretical foundations of the DPOAD framework. 

\subsection{Notions and Notations}

\subhead{Data~\cite{DBLP:conf/ccs/AsifPV19}} We consider a database as a multiset of elements from
a set $\mathcal X$, which is the set of possible values of records. In a database,
we assume each record is associated with a distinct individual. We represent a database $\mathcal{D}$ as a histogram in $\mathcal{H}=\{y \in \mathbb{N}^{\mathcal{X}} : ||y||_1 < \infty \}$,
where $\mathcal{D}$ is the set of all possible database, $\mathbb{N} = \{0, 1, 2, \cdots\}$, and $C_j$ is the number of records in $\mathcal{D}$ that are identical to $j$. In particular, at each iteration $i$, the data owner releases an excerpt of her/his dataset $D_i$ collected between time $t_{i-1}$ and $t_i$; converts it to histogram $H_i$ consist of a fixed set of bins $B_1, B_2, \cdots, B_m$ (also a subset of $\mathcal{H}$), and their corresponding counts reported in a time-window $C_{j\leq m ,t_{i-1}\leq t\leq t_i}$. The bins' edges (boundaries) are defined by the MSSP in a way that maximizes the detection accuracy (independent from the privacy mechanism), and the counts are collected in a specified time-window and are aggregated in a matrix whose elements are time series counts $C_{j\leq m ,t_{i-1}\leq t\leq t_i}$. This count matrix is also denoted by $C_i$ to save notations.

\subhead{Distance Metrics~\cite{NIPS2015_2b3bf3ee}}
For two probability distributions $p$ and $q$ over $[N]=\{1,\cdots, N\}$, for some $N \in \mathbb{Z}_{+}$. (1) the \textit{Kolmogorov} distance between $p$ and $q$ is defined as $d_K(p,q) \overset{\operatorname{def}}{=}	\max_{j\in[N]} |\sum_{i=1}^jp(i)-\sum_{i=1}^jq(i)|$, and (2) the total variation distance between $p$ and $q$ is defined as $d_{TV} (q, p)=0.5\cdot \norm{q-p}_1$ which is sometimes called \textit{statistical distance}. Note that $d_K(p,q)\leq d_{TV} (q, p)$.

\subhead{Agnostic Learning~\cite{NIPS2015_2b3bf3ee}} Let $\mathcal{C}$ be a family of distributions over a domain $\Omega$. Given sample access to an unknown distribution $p$ over $\Omega$ and $0<\alpha, \beta<1$, the goal of an ($\alpha, \beta$)-agnostic learning algorithm for $\mathcal{C}$ is to compute a hypothesis distribution $h$ such that with probability at least $1-\beta$, it holds $d_{TV} (h,p) \leq C \cdot opt_{\mathcal{C}}(p) + \alpha$; where $opt_{\mathcal{C}}(p) := \inf _{q \in \mathcal{C}} d_{TV} (q, p)$ and $C \geq 1$ is a
universal constant. 
\subsection{Differential Privacy}
\label{section: DP def}
\label{def: differential privacy original}

We follow the standard definition of
$\epsilon$-\DP~\cite{DworkNRRV09,NissimRS07}. Let $\D$ be a dataset of
interest, and $d$, $d'$ be two adjacent subsets of $\D$ where $d'$ can be obtained from $d$ by simply adding or subtracting the data of one individual. A randomization mechanism $\mathcal M: \D \times \Omega\to \R$ is $\epsilon$-differentially private if
for all $S \subset \R$, 
\begin{align}\label{eq: standard def approximate DP original}
\Prob(\mathcal M(d) \in S) \leq e^{\epsilon} \Prob(\mathcal M(d') \in S) \;\; 
\end{align}

If the inequality fails, then a leakage ($\epsilon$ breach) takes place. 
We recall below a basic mechanism that can be used to answer queries in an $\epsilon$-differentially private way.

While strong $\epsilon$-DP is ideal, utility may demand compromise. Precedent exists for weaker privacy, with the
definition of ($\epsilon,\delta$)-DP where with small probability $\delta$, $\epsilon$-DP is allowed to fail. Another relaxed notion is ($\epsilon,\gamma$)-Random Differential Privacy (RDP)~\cite{Hall_Wasserman_Rinaldo_2013}, where on all but a small $\gamma$-proportion of unlikely database pairs, strong $\epsilon$-DP holds.

\begin{defn} \label{def:parameters}
Randomized mechanism $\mathcal M_q: \D \times \Omega \to \mathbb R$ preserves ($\epsilon,\gamma$)-random differential privacy, at privacy level $\epsilon>0$ and confidence $\gamma \in (0, 1)$, if  $\Prob [\forall S \subset \R, \Prob (\mathcal M(d) \in S) \leq e^{\epsilon} \cdot \Prob (\mathcal M(d) \in S)] \geq 1-\gamma$, with the inner probabilities over the mechanism’s
randomization, and the outer probability over neighbouring $d, d' \in \D$ drawn from some $P^{n+1}$. 
\end{defn}

\label{section: basic mech}
\subsection{Laplace Mechanism} The Laplace mechanism modifies a numerical query result by adding zero-mean noise (denoted as $Lap(b)$) subjecting to a Laplace distribution with mean zero and scale parameter $b$. It has density
    $p(x;b)=\frac{1}{2b}exp(-\frac{|x|}{b})$ and variance $2b^2$. The following sensitivity concept plays an important role in the design of differentially private mechanisms~\cite{10.1007/11681878_14}.

\begin{defn}[Global Sensitivity]\label{defn: sensitivity}
The sensitivity of a query $q: \D \to \R$ is defined as
$\Delta q= \max_{d,d':\Adj(d,d')} |q(d) - q(d')|$~\cite{DworkNRRV09,NissimRS07}.
\end{defn}
    
\begin{thm}	\label{thm: Lap mech}
Let $q: \D \to \mathbb R$ be a query , $\epsilon>0$.
Then the  mechanism $\mathcal M_q: \D \times \Omega \to \mathbb R$ 
defined by $\mathcal M_q(d) = q(d) + w$, with $w \sim Lap(b)$, where $b \geq \frac{\Delta q}{\epsilon}$, is $\epsilon$-differentially private~\cite{10.1007/11681878_14}.
\end{thm}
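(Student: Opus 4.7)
The plan is to verify the $\epsilon$-DP density-ratio bound pointwise and then integrate it up to an arbitrary measurable event. Because $\mathcal M_q(d) = q(d) + w$ with $w \sim \Lap(b)$, the distribution of $\mathcal M_q(d)$ is just the Laplace density translated by $q(d)$, i.e.\ $p_d(r) = \tfrac{1}{2b}\exp(-|r-q(d)|/b)$, and likewise $p_{d'}(r) = \tfrac{1}{2b}\exp(-|r-q(d')|/b)$. This closed form is what makes a direct pointwise comparison tractable.

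First I would fix two neighbouring datasets $d, d' \in \D$ with $\Adj(d,d')$ and form the ratio of output densities:
\begin{equation*}
\frac{p_d(r)}{p_{d'}(r)} = \exp\!\left(\frac{|r-q(d')| - |r-q(d)|}{b}\right).
\end{equation*}
Next I would bound the exponent by the reverse triangle inequality, $|r-q(d')| - |r-q(d)| \leq |q(d)-q(d')|$, followed by Definition~\ref{defn: sensitivity} of global sensitivity, $|q(d)-q(d')| \leq \Delta q$. The hypothesis $b \geq \Delta q/\epsilon$ then gives $p_d(r)/p_{d'}(r) \leq e^{\epsilon}$ for every $r \in \R$.

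To lift this pointwise bound to the set-level statement demanded by $\epsilon$-DP, I would integrate against any measurable $S \subset \R$:
\begin{equation*}
\Prob(\mathcal M_q(d) \in S) = \int_S p_d(r)\,dr \leq e^{\epsilon}\int_S p_{d'}(r)\,dr = e^{\epsilon}\,\Prob(\mathcal M_q(d') \in S),
\end{equation*}
which is exactly (\ref{eq: standard def approximate DP original}). Since $d, d'$ and $S$ were arbitrary, $\epsilon$-DP follows.

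There is no real obstacle here; the result is classical and the whole argument collapses to a one-line use of the reverse triangle inequality together with the definition of $\Delta q$. The only subtlety worth double-checking is that the adjacency notion used when taking the supremum in Definition~\ref{defn: sensitivity} is the same notion of ``differing in one individual's record'' used in the $\epsilon$-DP definition, so that $\Delta q$ legitimately dominates $|q(d)-q(d')|$ for every adjacent pair appearing in the DP inequality.
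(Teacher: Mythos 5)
Your argument is correct and complete: the pointwise density-ratio bound via the reverse triangle inequality, combined with $|q(d)-q(d')|\leq \Delta q$ and $b\geq \Delta q/\epsilon$, followed by integration over an arbitrary measurable $S$, is the standard proof of the Laplace mechanism's $\epsilon$-DP guarantee. The paper itself offers no proof of Theorem~\ref{thm: Lap mech} --- it simply cites the original work of Dwork et al. --- so there is nothing to compare against; your write-up supplies exactly the classical argument that the citation points to, and your closing remark about matching the adjacency relation in Definition~\ref{defn: sensitivity} with the one in the DP definition is the right consistency check.
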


\subsection{Pain-Free Algorithm~\cite{pmlr-v70-rubinstein17a}}
Rubinstein et al. (the Pain-Free solution) developed an RDP mechanism using a \emph{sensitivity sampler}, which approximates the global sensitivity with high probability. 

\begin{defn} [Approximate Sensitivity]
\label{defn:Appsens}
Consider sampling parameters size $1 \ll m \in \mathbb N$ and order statistic index $ k \leq m \in \mathbb N$, the (k,m)-approximate sensitivity of a query $q: \D \to \R$ is given by uniformly sampling $m$ sensitivity candidates, sorts them, and picks the $k^{th}$ one to calibrate the Laplace mechanism.
 \end{defn}
As shown in the following theorem, combined with generic mechanisms like Laplace, such a sampler enables systematizing of privatization.

\begin{thm} [Proof in \cite{pmlr-v70-rubinstein17a})]
\label{thm:painfree}
Consider any database $\D$ of $n$ records, privacy parameters $\epsilon>0$, $\gamma \in(0,1)$, Kolmogorov approximation confidence $0 <\rho <\min\{\gamma, 1/2\}$, and the known distribution $P$ on $\D$. The \textit{Pain-Free algorithm} defined as Laplace mechanism calibrated with (k,m)-approximate sensitivity preserves ($\epsilon,\gamma$)-random differential privacy, where $m=\left \lceil \cfrac{\log(1/\rho)}{2(\gamma-\rho)^2} \right \rceil$, $k=\left \lceil m\big(1-\gamma+\rho+\sqrt{\cfrac{\log(1/\rho)}{2m}} \right \rceil$, and $\rho= exp(W_{-1}(-\frac{\gamma}{2\sqrt{e}})+0.5)$.
\end{thm}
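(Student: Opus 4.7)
The plan is to reduce $(\epsilon,\gamma)$-RDP to a one-sided concentration statement on the empirical CDF of the local sensitivity, and then split the failure budget $\gamma$ between DKW slack and quantile slack. Let $S := |q(d)-q(d')|$ denote the local sensitivity of a random neighboring pair drawn from $P^{n+1}$, and let $F$ be its CDF. For any fixed value of the sampled threshold $\hat{b} := S_{(k)}$, the Laplace mechanism calibrated with scale $\hat{b}/\epsilon$ is $\epsilon$-DP on every pair for which $S \leq \hat{b}$, so the fraction of neighboring pairs violating $\epsilon$-DP is exactly $1-F(\hat{b})$. Establishing $(\epsilon,\gamma)$-RDP therefore reduces to showing $F(S_{(k)}) \geq 1-\gamma$ with appropriate probability over the $m$-sample generating $S_{(k)}$.

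First I would invoke the one-sided Dvoretzky--Kiefer--Wolfowitz inequality for the empirical CDF $F_m$ built from $m$ iid sensitivity draws $S_1,\ldots,S_m$:
\begin{equation*}
\Prob\left( \sup_x \bigl( F_m(x) - F(x) \bigr) > t \right) \;\leq\; \exp(-2 m t^2).
\end{equation*}
Choosing $t = \sqrt{\log(1/\rho)/(2m)}$ drives the tail probability to at most $\rho$. On the complementary event, using $F_m(S_{(k)}) = k/m$, one obtains $F(S_{(k)}) \geq k/m - t$. Substituting the stated $k = \lceil m(1-\gamma+\rho+t)\rceil$ yields $F(S_{(k)}) \geq 1-\gamma+\rho$ conditional on DKW success, so the conditional probability that a random pair is uncovered is at most $\gamma-\rho$. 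A union bound over (i) DKW failure and (ii) an uncovered pair conditional on DKW success caps the overall $\epsilon$-DP violation probability by $\rho + (\gamma-\rho) = \gamma$, which is exactly the RDP requirement. Setting $t = \gamma-\rho$ in $t = \sqrt{\log(1/\rho)/(2m)}$ reproduces the stated value of $m$.

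Finally, $\rho \in (0,\gamma)$ is a free parameter, so I would minimize $m(\rho) = \lceil \log(1/\rho)/(2(\gamma-\rho)^2)\rceil$ by differentiating. The first-order condition rearranges to a transcendental equation of the form $w e^{-w} = \gamma/(2\sqrt{e})$ in a suitable substitution $w = \log(1/\rho)+\tfrac{1}{2}$, whose closed-form solution uses the non-principal branch of the Lambert-$W$ function and produces $\rho = \exp(W_{-1}(-\gamma/(2\sqrt{e}))+1/2)$. The main obstacle is the careful book-keeping of the two independent randomness sources (the $m$-sample that fixes $\hat{b}$ versus the draw of the test pair) because the RDP definition houses them asymmetrically; once the failure event is decomposed cleanly as above, everything else is a concentration bound plus a one-variable calculus optimization.
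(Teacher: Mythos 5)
Your proposal is correct and follows essentially the same route as the paper (which defers to Rubinstein et al.\ but reuses the identical machinery in its own Theorem~\ref{DPOADDP}): a one-sided DKW bound on the empirical sensitivity CDF, the observation that $\Prob(A_{\Delta})\geq\Phi(\Delta)$ for Laplace noise calibrated at threshold $\Delta$, the choice $k=\lceil m(1-\gamma+\rho+t)\rceil$ with $t=\sqrt{\log(1/\rho)/(2m)}$ forcing $F(S_{(k)})\geq 1-\gamma+\rho$ on the DKW-success event, and the Lambert-$W_{-1}$ solution of the first-order condition for $\rho^{*}$. The only cosmetic difference is that you split the failure budget additively via a union bound ($\rho+(\gamma-\rho)=\gamma$) where the paper multiplies conditional bounds ($(1-\gamma+\rho)(1-\rho)\geq 1-\gamma$); the two decompositions are equivalent here.
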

The derived specific expressions involve branches of the Lambert-W function, which is the inverse relation of the function $f(z) =z \cdot exp(z)$. Moreover, a number of natural choices for sampling distribution $P$ could be made. In particular, the Pain-Free algorithm and its privacy guarantee are derived by assuming a \textit{uniform} distribution defined over the domain of the dataset $\D$. However, the accuracy of the solution can be significantly boosted where a simulation process capable of approximating the actual $P$ exists, e.g., the anomaly scores computed by an MSSP can be leveraged.

\section{The DPOAD Framework}
\label{sec:system}

In this section, we introduce the main building blocks for our DPOAD
framework, provide a high level overview of our DPOAD framework and describe its main steps.

\subsection{The Building Blocks}
\label{sec:bb}
DPOAD relies on the following three modules; namely, the \emph{DP Distribution Learning}, the \emph{Monotonic Disentanglement} and the \emph{Count Reconstruction} which are described below. 
\subsubsection{DP Distribution Learning}
\label{sec:dplearn}
One important building block of DPOAD is the distribution learning module. This algorithm measures the closeness between distributions in total variation distance ($d_{TV}$), and the goal is to construct a hypothesis $h$ from the actual distribution $p$ such that (with high probability) the total variation distance $d_{TV}(h,p)$ between $h$ and $p$ is at most $\alpha$, where $\alpha > 0$ is called the approximation confidence. Since the accuracy of the sensitivity learning is directly linked to the soundness of the sensitivity estimation (which is crucial for providing a sufficient degree of protection), we consider two-phase operations for DPOAD, namely (1) \textit{the learning phase ($t<T$)} over which a sufficiently large number of samples are provided to MSSP (using the Pain-Free algorithm) to approximate the underlying distribution for a desired confidence, and (2) \textit{the prediction phase ($t\geq T$)} during which the distribution learning can be considered as sufficiently ``accurate'', and therefore, the learnt distribution with its approximation confidence can be successfully leveraged to the Pain-Free algorithm. The accuracy guarantee of the learnt discrete distribution from a sufficient number of DP data samples (sample complexity of $\mathcal{O}(|\mathcal{D}| \log(1/\beta)/(\epsilon \alpha))$) is quite high as proved in Theorem~\ref{dplearning}, and therefore, as shown in Section~\ref{sec:exp}, the value of $T$ is reasonably low (few iterations of communication). Figure~\ref{fig:pdfest} illustrates this theoretical results for data samples collected from a smart home environment in \cite{10.1145/3319535.3354254}. These phases are further described in the framework.

\begin{figure}[ht] \centering
	\includegraphics[width=0.6\linewidth]{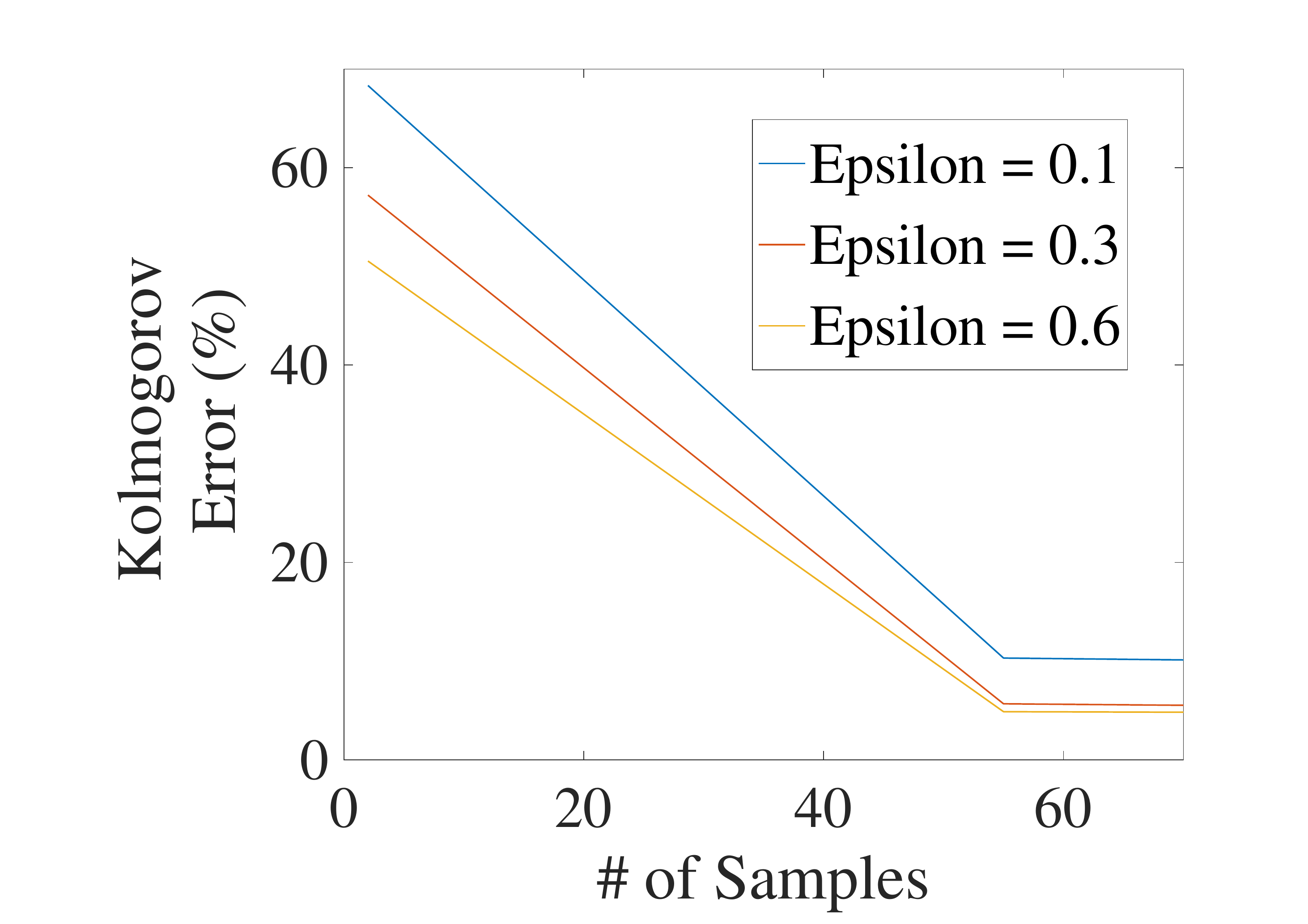}\vspace{-0.15in}
\caption{The average Kolmogorov error between the learnt distribution from DP samples and the actual data distribution, for an excerpt of the smart home dataset~\cite{10.1145/3319535.3354254}.}\vspace{-0.1in}
\label{fig:pdfest}
\end{figure}


\begin{figure}[ht] \centering
	\includegraphics[width=1\linewidth, viewport=17 480 545 669,clip]{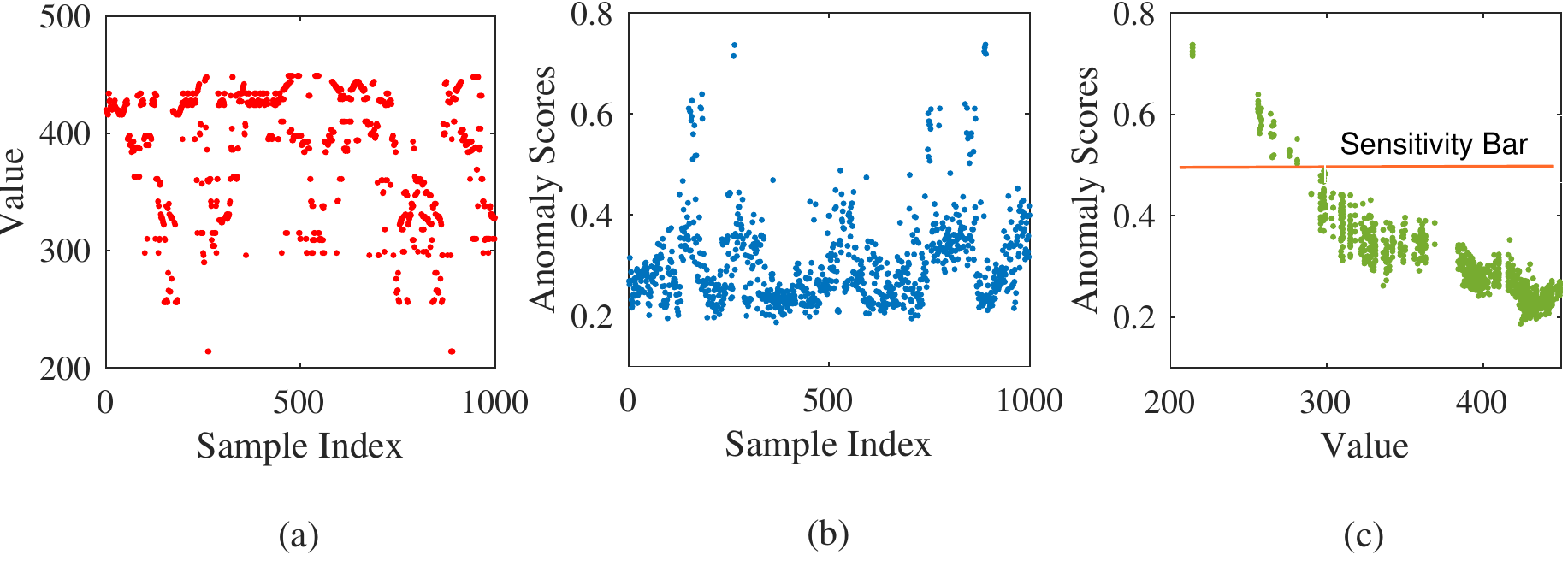}\vspace{-0.1in}
\caption{The density plot of the distribution of (a) the example data, and (b) its anomaly scores. Figure (c) shows that mapping data samples to their anomaly scores enables \textit{monotonic disentanglement}}
\label{densitmap}
\end{figure}

\subsubsection{Monotonic Disentanglement} 
\label{sec:disentang}
This building block is required during the prediction phase to enable effective anomaly detection with RDP. As shown in Figure~\ref{fig:3}, both small and large sensitivity values (the height of the water) entail weak anomaly detection results, specifically, the former introduces a large \textit{false positive} and the latter introduces a large \textit{false negative}. This issue is inherent to the notion of differential privacy with a constant or uniformly sampled sensitivity value which essentially makes the mechanism ineffective for many classification-based analyses including anomaly detection (see Section~\ref{sec:R$^2$DP as a
  Patch to Existing Work}). Therefore, an effective mechanism must provide some sort of \textit{quantified disentanglement} between benign and anomalous records (desired class of detection from the rest of the dataset) when imposing any notion of differential privacy. Fortunately, DPOAD can easily address this by leveraging the granular probability distribution of the data using the anomaly scores, meaning that the sensitivity to be sampled from a rather objective-oriented (anomaly detectable) probability distribution.

Beside disentangling, another issue lies in the fact that anomalous records can potentially take any value as shown in Figure~\ref{fig:3} and Figure~\ref{densitmap}. Therefore, sampling the sensitivity value, even merely from the benign records, may contribute to very large values. Therefore, effective disentanglement ideally maps the anomalous records to larger values so that tuning DP mechanism w.r.t. benign values could preserve both privacy of the benigns and accuracy of the anomaly detection. Figure~\ref{fig:3} metaphorically shows such a desirable \textit{monotonic} (larger anomaly score$\rightarrow$ less protection) disentanglement property that our DPOAD framework must prevail. Finally, Figure~\ref{densitmap} shows that a perfect \textit{monotonic} and \textit{disentangled} version of the original data could be generated through mapping the data records to their assigned anomaly scores. This finding is very intuitive because anomaly scores are essentially designed to monotonically disentangle the outlier records from the benign ones in a given dataset. We note that the fixed sensitivity bar shown in Figure~\ref{densitmap}(c) will look like the shape of water if we ``map back'' the \textit{monotonic} and \textit{disentangled} version of the data to its original format.

\subsubsection{Count Reconstruction} 
Clearly, the aforementioned mapping is crucial for DP to maintain the functionality of anomaly detection, and has to be applied all the time (both phases). However, since anomaly detection operates over counts, a count reconstruction function must be applied prior to sending the data over. The count reconstruction function is the reverse function of the monotonic disentangler which generates pseudo-counts $\Tilde{C_{i}}$ for the MSSP to compute the new set of anomaly scores.
\begin{figure*}[ht] \centering
	\includegraphics[width=0.95\linewidth, viewport=20 32 930 510,clip]{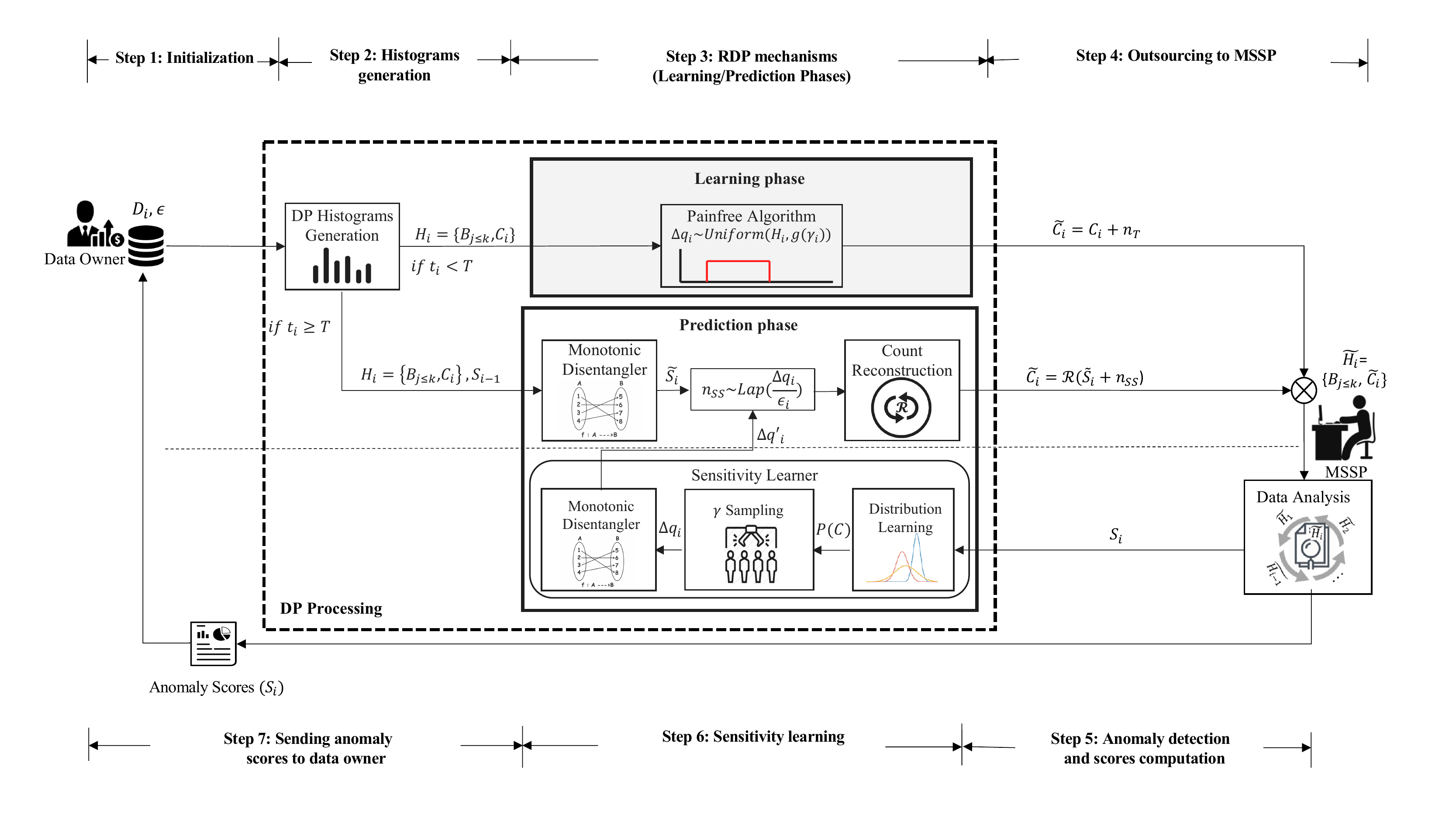}
\caption{The high level overview of the DPOAD framework}\vspace{-0.1in}
\label{fig:ov}
\end{figure*}
\subsection{The Framework}
As shown in Figure~\ref{fig:ov}, DPOAD includes the following steps.
\subsubsection{DP Outsourcing at the Data Owner Side}

\begin{description}
	\item[Step 1:] At iteration $i$, the data owner releases an excerpt of her/his dataset collected between time $t_{i-1}\leq t\leq t_i$; denoted by $D_{i\leq t}$; for anomaly detection analysis, it specifies the desired degree of differential privacy (DP) guarantee ($\epsilon$).
	
	\item[Step 2:] The data excerpt $D_i$ is converted into a histogram $H_i$ with a fixed set of bins $B_1, B_2, \cdots, B_k$. The bins' edges (boundaries) are defined by the MSSP in a way that maximizes the detection accuracy (independent from the privacy mechanism). 
	
	\item[Step 3:] Next, the corresponding count matrix $C_i=\{C_{\{j\leq k,  t_{i-1}\leq t \leq t_i\}} \\ \in \mathcal{C}\}$ consisting of time series counts of each bin will undergo the RDP mechanism. \footnote{Note that the privacy parameter $\gamma$ is the same as the confidence probability of anomaly detection. Our analysis and experiments show that running DPOAD with $\gamma \approx 0.2$ both converges and preserves $\epsilon$-DP of benign records.} This mechanism in the learning phase, prior to learning iteration threshold $T$ (detailed in Section~\ref{sec:dplearn}), is the Pain-Free algorithm which exerts a Laplace mechanism with a sensitivity value uniformly sampled from the domain of $\mathcal{C}$, and in the prediction phase consists of the \textit{monotonic disentangler} (described in Section~\ref{sec:disentang}), Laplace mechanism and the \textit{count reconstruction} function. In particular, the disentangler transforms the counts to their anomaly scores using the latest set of anomaly scores. The sensitivity of Laplace mechanism is provably computed using the \textit{sensitivity learner} (Step 6). 

	\item[Step 4:] The privatized counts $\tilde{C}_i$ is outsourced to the MSSP to compute the new set of anomaly scores.
          \end{description}
          \subsubsection{Sensitivity Update at the MSSP Side}
          \begin{description}
          \item[Step 5:] The MSSP executes a score-based anomaly detection function, e.g., the Kolmogorov–Smirnov test (KS test), over all the received pseudo-counts, and generates the anomaly scores $S_{i}$ corresponding to data of iteration $i$. 
          
          \item[Step 6:] The set of computed anomaly scores is analyzed using the sensitivity learner module to update the sensitivity value for the next iteration. The operations in this module includes (1) constructing the probability distribution of each histogram bin's counts ($P(\mathcal{C}))$) according to all previously aggregated data and their anomaly scores (the higher score of an observation is, the lower its probability mass function will become), (2) sampling $m$ sensitivities from the constructed distribution, sorting them and reporting the $k^{th}$ one as $\Delta q_i$ where $m$ and $k$ are given in the Pain-Free algorithm~\ref{thm:painfree}, and (3) monotonic disetanglement mapping function to generate an anomaly score version of the sensitivity $\Delta q'_i$.

	\item[Step 7:] The anomaly detection report for iteration $i$ is sent back to the data owner.
\end{description}

\IncMargin{1em}

\begin{algorithm}[!h]

\caption{Prediction Phase Algorithm in DPOAD}
 \small
 \SetKwInOut{Input}{Input}
 \SetKwInOut{Output}{Output}
\Input{Dataset $D$, Privacy budgets $\epsilon$, Number of samples ($n$) transmitted so far using Pain-Free, Learnt distribution $P$, Domain $\mathcal{C}$, Histogram function $\mathcal{H}$, number of histogram bins $k$}

\Output{Set of identified Anomalies $S_i$}

\textbf{Function}(Private Cloud)\\
$G\leftarrow P(\mathcal{C})$\\
$m \leftarrow I \gg m$ in Theorem~\ref{DPOADDP}\\
$\rho^{*} \leftarrow \cfrac{p}{(m+q)^{r}}$ \\
$k \leftarrow \left \lceil m\big(1-\gamma+\rho^{*}+\sqrt{-\cfrac{\log(1-\sqrt{1-\rho^{*}}))}{2m}} \right \rceil$\\

$ Sens \leftarrow \{\Delta q_{j\leq m} \sim G \}$\\
$Sens \leftarrow sort(Sens)$\\
$\Delta q_i\leftarrow Sens(k)$\\

$\Tilde{H}_{i}=\{B_{\{j\leq k \}},C_{\{j\leq k \}}\}\leftarrow \mathcal{H} (D_i,k)$\\

$\Tilde{S}_i \leftarrow P(C_{\{j\leq k \}})$\\

$\Tilde{S}_i \leftarrow \Tilde{S}_i+Lap(\frac{\Delta q_i}{\epsilon_i}) $\\

$\mathcal{R} \leftarrow normalize(P^{-1}(\cdot))$\\
\textbf{Function}(MSSP)\\
$\Tilde{C}_i \leftarrow \mathcal{R}(\Tilde{S}_i)$\\ 

$S_i=\mathcal{A}(\Tilde{H}_{1,2,\cdots, i})$: Analyzing all previously received datasets to optimally identify anomalous records\\

Update $P$

\textbf{return} $S_i, P$
 	\label{alg:analyst-actions1}
 \end{algorithm} 

\DecMargin{1em}
Since DPOAD during learning phase merely focuses on distribution learning using the Pain-Free algorithm, we summarize the above steps of DPOAD over the prediction phase in Algorithm~\ref{alg:analyst-actions1}. Next, we analyze the privacy and utility of DPOAD.

\section{Analysis}
\label{sec5}
In this section, we analyze privacy, utility and efficiency advantages of DPOAD.
\subsection{Privacy Analysis}
\label{subsec:privacy}

Our analysis adapts the well-known findings in \text{private distribution estimation} \cite{NIPS2015_2b3bf3ee}, which are reviewed as below. Sepecifically, let $\mathcal{C}_N$ be the family of all distributions over $[N]$, the following holds:
\begin{thm}
\label{dplearning}
There is a computationally efficient $\epsilon$-differentially private $(\alpha, \beta)$-learning algorithm for $\mathcal{C}_N$ that uses $n = \mathcal{O}((N + \log(1/\beta))/\alpha^2 + N \log(1/\beta)/(\epsilon \alpha))$ samples.
\end{thm}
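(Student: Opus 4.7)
The plan is to analyze a noisy--histogram learner whose total variation error decomposes into a statistical (sampling) component and a privacy (noise) component, balanced so that each contributes at most $\alpha/2$ with failure probability at most $\beta/2$. Concretely, given $n$ i.i.d.\ samples from the unknown $p\in \mathcal{C}_N$, I would compute the empirical counts $C_1,\dots,C_N$, release $\tilde C_i = C_i + W_i$ with $W_i \sim \Lap(1/\epsilon)$ drawn independently, and output the hypothesis $h$ obtained by projecting $\tilde C/n$ onto the probability simplex in $\ell_1$ norm (clipping negatives and renormalizing).

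The privacy claim is immediate from Theorem~\ref{thm: Lap mech}: the histogram query has $\ell_1$-sensitivity $\Delta q \leq 2$ (changing one record moves at most two bins by one each), so the Laplace step is $\epsilon$-DP up to a constant in the noise scale, and the simplex projection is pure post-processing. Privacy therefore reduces purely to setting the noise scale, decoupled from any accuracy analysis.

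For utility, I would invoke the triangle inequality $d_{TV}(h,p) \leq d_{TV}(\hat p,p) + d_{TV}(h,\hat p)$ where $\hat p_i = C_i/n$. The first term is the classical nonprivate learning rate for discrete distributions on $[N]$: standard VC/bracketing arguments give $d_{TV}(\hat p,p) \leq \alpha/4$ with probability at least $1-\beta/2$ whenever $n = \Omega((N+\log(1/\beta))/\alpha^2)$. For the second term, $\ell_1$-projection onto the simplex is a contraction in total variation, so $d_{TV}(h,\hat p) \leq \tfrac12 \|W\|_1/n$; a Bernstein-type bound on the sub-exponential sum $\|W\|_1$ of $N$ i.i.d.\ Laplace$(1/\epsilon)$ variables yields $\|W\|_1 = O((N+\log(1/\beta))/\epsilon)$ with probability $1-\beta/2$. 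Forcing $\tfrac12 \|W\|_1/n \leq \alpha/4$ gives the privacy sample threshold of the form stated in the theorem, and summing with the statistical threshold gives the advertised overall bound.

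The main obstacle, and the step requiring the most care, is obtaining a sharp tail bound on $\|W\|_1$ that avoids the loose $\log(N/\beta)$ factor that would fall out of a naive union bound over the $N$ coordinates. One must instead apply a two--regime Bernstein inequality directly to the sum $\sum_i |W_i|$, whose moment generating function is sub-exponential, to recover the correct $\log(1/\beta)$ dependence. A secondary but non-trivial subtlety is verifying that projecting the noisy vector onto the simplex preserves $\ell_1$ error only up to a constant factor (a consequence of the Pythagorean inequality for Bregman projections), so that the final, valid-distribution output $h$ inherits the empirical-plus-noise accuracy without an additional multiplicative penalty.
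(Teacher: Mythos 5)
Your proposal is correct and matches the argument the paper relies on: the paper only sketches this result by citing Diakonikolas et al., whose proof is exactly the Laplace-perturbed empirical histogram you describe, with the VC-type bound handling the sampling term and concentration of the Laplace noise handling the privacy term. The only nitpicks are constant-factor ones ($\ell_1$ projection onto the simplex costs a factor of $2$ rather than being a contraction, and the histogram sensitivity is $2$ in the swap model), neither of which affects the stated $\mathcal{O}(\cdot)$ bound.
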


\subhead{Sketch Proof of Theorem \ref{dplearning}}
Diakonikolas et al.~\cite{NIPS2015_2b3bf3ee}, by properties of the Laplace distribution combining with the VC inequality~\cite{devroye2012combinatorial} and the Dvoretzky-Kiefer-Wolfowitz (DKW) inequality~\cite{10.2307/2237374}, showed that at least $\mathcal{O}(N + \log(1/\beta) / \alpha^2+ N \log(1/\beta)/(\epsilon \alpha^2))$ number of $\epsilon$-DP data samples is required to accurately (($\alpha,\beta$)-learning) estimate the non-private discrete probability distribution of the dataset.
 

\subsubsection{Privacy Guarantee of DPOAD}
We now present the required conditions for DPOAD to satisfy a desired level of ($\epsilon,\gamma$)-RDP. 

\begin{thm}
\label{DPOADDP}
   The DPOAD framework satisfies ($\epsilon,\gamma$)-RDP if for the sensitivity sampling parameters size $m$ , $k$ we have
    $$
m=\begin{cases}
\left \lceil \frac{\log(1/\rho)}{2(\gamma-\rho)^2} \right \rceil, & \text{Learning}\\
            \left \lceil \frac{-2\log(1-\sqrt{1-\rho})}{(T-\sqrt{T^2-4})^2} \right \rceil & \text{Prediction,}
		 \end{cases}
$$
    $$
k=\begin{cases}
\left \lceil m\Big(1-\gamma+\rho+\sqrt{\frac{\log(1/\rho)}{2m}} \Big ) \right \rceil, & \text{Learning}\\
           \left \lceil m\Big(1-\gamma+\rho+\sqrt{-\frac{\log(1-\sqrt{1-\rho}))}{2m}} \Big) \right \rceil, & \text{Prediction}
		 \end{cases}
$$
where $T=2+\frac{n\cdot \epsilon_i}{2N\cdot c}$, and $c$ is the distribution learning constant parameter for Laplace mechanism. Moreover, the $\rho^{*}$ values which minimize $k$ for a fixed $m$ in each state are
$$
\rho^{*}= \begin{cases}
 exp(W_{-1}(-\frac{\gamma}{2\sqrt{e}})+0.5), & \text{Learning}\\
            \frac{p}{(m+q)^{r}}, & \text{Prediction.}
		 \end{cases}
$$
\end{thm}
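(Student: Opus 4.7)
The plan is to treat the two phases separately, since the \emph{learning} phase reduces directly to the Pain-Free guarantee (Theorem~\ref{thm:painfree}) while the \emph{prediction} phase requires composing Pain-Free with the private distribution-learning guarantee (Theorem~\ref{dplearning}). For the learning phase the sensitivity is drawn from the uniform distribution over the data domain $\mathcal{C}$, which is exactly the setting of~\cite{pmlr-v70-rubinstein17a}, so I would simply quote Theorem~\ref{thm:painfree} for the $m,k$ formulas and then derive $\rho^{*}=\exp\!\big(W_{-1}(-\gamma/(2\sqrt{e}))+1/2\big)$ by setting $\partial k/\partial\rho=0$, which reduces to a transcendental equation inverted by the $-1$ branch of the Lambert $W$ function, as in the original Pain-Free analysis.

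For the prediction phase, the essential change is that the $m$ sensitivity draws now come from the learnt distribution $G=P(\mathcal{C})$ rather than from the true data distribution, so the $k$-th order statistic inherits \emph{two} sources of error: (a) the Kolmogorov deviation between the empirical CDF of the $m$ draws and the CDF of $G$, which I would bound via the DKW inequality~\cite{10.2307/2237374}, and (b) the total-variation distance between $G$ and the true data distribution, which is governed by Theorem~\ref{dplearning} applied to the $n$ samples transmitted during the learning phase. My plan is to split the overall failure budget multiplicatively, demanding that each of (a) and (b) hold with probability at least $\sqrt{1-\rho}$ so that, by independence of the two randomness sources, their joint success probability is at least $1-\rho$. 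Invoking Theorem~\ref{dplearning} at failure probability $1-\sqrt{1-\rho}$ yields a TV error of the form $\alpha = c\, N\log\!\big(1/(1-\sqrt{1-\rho})\big)/(\epsilon_i n)$, from which the quantity $T=2+n\epsilon_i/(2Nc)$ in the theorem statement emerges after clearing denominators and completing the square.

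Combining the two bounds, I would write the requirement ``the $k$-th sampled sensitivity upper-bounds the $(1-\gamma)$-quantile of the true sensitivity distribution'' as
\begin{equation*}
\frac{k}{m} \;\geq\; (1-\gamma)\;+\;\sqrt{\frac{-\log(1-\sqrt{1-\rho})}{2m}}\;+\;\alpha,
\end{equation*}
and then invert this inequality to obtain simultaneously the prediction-phase formulas for $m$ and $k$. Optimizing $k$ over $\rho$ for fixed $m$ in this regime no longer reduces to a Lambert-$W$ closed form; instead $\partial k/\partial\rho=0$ yields a power-law minimiser $\rho^{*}=p/(m+q)^{r}$, matching the structure stated in the theorem, with explicit constants $p,q,r$ depending on $\gamma$ and $T$.

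The main obstacle I anticipate is the composition step: one has to justify treating events (a) and (b) as genuinely independent so that failure probabilities multiply rather than only union-bound, which in turn requires that the randomness used in the learning-phase excerpts is disjoint from the fresh randomness used to draw the $m$ sensitivity candidates in each prediction iteration. A related delicate point is that the $\epsilon_i$ appearing inside $T$ is the same privacy parameter that drives the RDP claim on the left-hand side, so the argument must treat the privacy budget spent during the learning phase as already accounted for by Theorem~\ref{dplearning} and must not double-count it against the $\epsilon$ in the final $(\epsilon,\gamma)$-RDP guarantee.
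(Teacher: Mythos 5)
Your proposal follows essentially the same route as the paper: the learning phase is quoted verbatim from Pain-Free, and the prediction phase uses the identical triangle-inequality decomposition of the sensitivity-quantile error into a DKW empirical-CDF deviation plus a DP distribution-learning error (Theorem~\ref{dplearning}), with the failure budget split so that each factor holds with probability $\sqrt{1-\rho}$ (the paper reaches the same split by equalizing the two exponential bounds via AM--GM, which is where $T$ and the quadratic $Q-\sqrt{Q^2-4}$ arise), yielding the same $m$ and $k$ formulas. The only substantive divergence is the prediction-phase $\rho^{*}$: you claim the power law $p/(m+q)^{r}$ falls out of $\partial k/\partial\rho=0$, whereas the paper obtains it only as a numerically fitted approximation to the minimizer, not as a closed-form solution of the stationarity condition.
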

where $p\approx 1.426$, $q\approx 0.8389$ and $r \approx 0.4589$.
\begin{proof}
In the learning state, the ($\epsilon,\gamma$)-differentially private Pain-Free algorithm is applied and the results in the Theorem are borrowed from~\cite{pmlr-v70-rubinstein17a}, so it remains to prove the privacy guarantee of the prediction phase. In the prediction phase ($t \geq T$), let $\Prob_{\mathcal{C}}$ be the latent distribution for the count values of a given histogram bin $B_j\leq k$. In line with the sensitivity analysis given in~\cite{pmlr-v70-rubinstein17a}, for the i.i.d. samples of sensitivity $\Delta q_1, \cdots, \Delta q_m$ drawn from $\mathcal{C}$, denote the corresponding fixed unknown CDF, and corresponding random empirical CDF, by
\begin{eqnarray*}
\label{discretizePDF}
\Phi(x) = \Prob_{\mathcal{C}}(\Delta q\leq x) \\
 \Phi_m(x) = \frac{1}{m} \sum^m_{i=1}\mathds{1}(\Delta q_i\leq x). 
\end{eqnarray*}
However, DPOAD's samples are drawn from DP estimated distribution $\hat{\Prob}_{\mathcal{C}}$ learnt over a dataset $S$ of $n$ noisy samples received by time $t$. 
Let $\hat{\Delta} q_{\{i\leq m\}}$ be the ascending sorted sample sensitivities from $\hat{\Prob}_{\mathcal{C}}$, and $\hat{\Phi}_m(x)=\frac{1}{m}\sum^m_{i=1}\mathds{1}(\hat{\Delta} q_i\leq x)$  be its corresponding random empirical CDF. Moreover, let $\hat{\rho} \in (0, 1)$ be the DKW probability error parameter, provided that 
\begin{equation}
   1 -\gamma+ \rho+\hat{\rho} \leq 1 \Leftrightarrow \rho' \leq \gamma-\rho,
\end{equation}
then for the random sensitivity $\hat{\Delta}=\hat{\Delta} q_k$, where $k = \ceil{m(1-\gamma + \rho + \hat{\rho})}$, we have $\hat{\Phi}_m(\hat{\Delta}) \geq 1 -\gamma+ \rho+\hat{\rho}$. Suppose $\tilde{C}_t$ is the released set of DP counts generated by randomizing by the current set of $C_t \in \Prob_{\mathcal{C}}$ in accordance with the sample sensitivity $\hat{\Delta}$. Define the events

\begin{eqnarray*}
& A_{\hat{\Delta}} = \{\forall \R \subset \mathbb{R}^n, \hat{\Prob}(\tilde{C}_t\in \R) \leq e^{\epsilon} \cdot \hat{\Prob}(\tilde{C'}_t\in \R)\}\\
& B_{\rho'}= \{\sup \limits_{\hat{\Delta}}(\hat{\Phi}_m(\hat{\Delta}) - \Phi(\hat{\Delta})) \leq \rho'\}
\end{eqnarray*}
The first is the event that DP holds for a specific DB pair,
when the mechanism is run with (possibly random) sensitivity parameter $\hat{\Delta}$; the second records the empirical CDF
uniformly one-sided approximating the CDF to level $\rho'$. Moreover, by definition of differential privacy, we have
\begin{equation}
   \forall \Delta>0 , \ \ \ \ \Prob_{\{C_t,C'_t \in \Prob_{\mathcal{C}} \}} (A_{\Delta}) \geq \Phi (\Delta). 
\end{equation}

Putting inequalities (2) and (3) together yields that
\begin{align*}
   &\Prob_{\{C_t,C'_t,\hat{\Delta} q_1, \cdots, \hat{\Delta} q_m \}}(A_{\hat{\Delta}})= \\
  &\mathbb{E}(\mathds{1} [A_{\hat{\Delta}}]|B_{\rho'}) \Prob(B_{\rho'})+\mathbb{E}(\mathds{1} [A_{\hat{\Delta}}]|\Bar{B}_{\rho'}) \Prob(\Bar{B}_{\rho'})\\  
&\geq \mathbb{E}(\hat{\Phi}(\hat{\Delta})|B_{\rho'}) \Prob(B_{\rho'})    \\
&\geq \mathbb{E}(\hat{\Phi}_m(\hat{\Delta})-\rho'|B_{\rho'}) \Prob(B_{\rho'})
\end{align*}
\begin{lem}
For all $\rho'>0$, $\Prob(B_{\rho'})$ in the DPOAD methodology is tightly lower-bounded by $\left[1-exp\left(-m\left(Q-\sqrt{Q^2-4}\right)^2/2\right)  \right]^2$, where $Q=2+\frac{n\cdot \epsilon}{2N\cdot m \cdot c}$ and $c$ is the hidden constant term in $\mathcal{O}(N + \log(1/\beta) / \alpha^2+ N \log(1/\beta)/(\epsilon \alpha^2))$.
\end{lem}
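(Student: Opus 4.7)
The goal is to lower-bound $\Prob(B_{\rho'}) = \Prob\bigl(\sup_{\hat{\Delta}}(\hat{\Phi}_m(\hat{\Delta}) - \Phi(\hat{\Delta})) \leq \rho'\bigr)$, where $\hat{\Phi}_m$ is the empirical CDF built from $m$ i.i.d.\ draws from the privately learned distribution $\hat{\Prob}_{\mathcal{C}}$ and $\Phi$ is the true CDF of $\Prob_{\mathcal{C}}$. The plan is to decompose the one-sided deviation into (i) a sampling error of the empirical CDF relative to $\hat{\Prob}_{\mathcal{C}}$ and (ii) a learning error of $\hat{\Prob}_{\mathcal{C}}$ relative to $\Prob_{\mathcal{C}}$, and then combine the two. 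The two error sources depend on independent randomness --- the $m$ sensitivity samples drawn in the prediction phase and the $n$ DP samples used during the learning phase --- which is exactly why the final bound comes out as a squared expression.

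Concretely, I would write
\[
\sup_x \bigl(\hat{\Phi}_m(x) - \Phi(x)\bigr) \leq \sup_x \bigl(\hat{\Phi}_m(x) - \hat{\Phi}(x)\bigr) + \sup_x \bigl(\hat{\Phi}(x) - \Phi(x)\bigr),
\]
where $\hat{\Phi}$ denotes the CDF of $\hat{\Prob}_{\mathcal{C}}$. For the first term, the Dvoretzky--Kiefer--Wolfowitz (DKW) inequality yields $\Prob(\sup_x(\hat{\Phi}_m - \hat{\Phi}) \leq \rho_1) \geq 1 - \exp(-2m\rho_1^2)$, mirroring the tool used in the proof of Theorem~\ref{thm:painfree}. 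For the second term, I would invoke Theorem~\ref{dplearning} with hidden constant $c$: with probability at least $1-\beta$ the learned distribution satisfies $d_{TV}(\hat{\Prob}_{\mathcal{C}}, \Prob_{\mathcal{C}}) \leq \alpha$, and since the Kolmogorov distance is always $\leq d_{TV}$, one gets $\sup_x|\hat{\Phi}(x) - \Phi(x)| \leq \alpha$. Combining by independence over the learning and prediction randomness gives $\Prob(B_{\rho'}) \geq (1 - \exp(-2m\rho_1^2))(1-\beta)$ whenever $\rho_1 + \alpha \leq \rho'$.

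To extract the stated closed form I would balance the two tails by forcing $\exp(-2m\rho_1^2) = \beta$ so that the bound becomes a perfect square $(1-X)^2$. Inverting the dominant DP term of Theorem~\ref{dplearning}, $n \asymp cN\log(1/\beta)/(\epsilon\alpha)$, yields $\log(1/\beta) = n\epsilon\alpha/(cN) = 2m\rho_1^2$ and hence $\alpha = 2mcN\rho_1^2/(n\epsilon)$. Substituting this into $\rho_1 + \alpha = \rho'$ produces a quadratic in $\rho_1$ whose coefficients are naturally encoded by $Q := 2 + n\epsilon/(2Nmc)$; the quadratic has roots of the form $(Q \pm \sqrt{Q^2-4})/2$, and the admissible (smaller) root substituted back into $\exp(-2m\rho_1^2)$ gives precisely $\exp(-m(Q-\sqrt{Q^2-4})^2/2)$, whose square is the claimed lower bound.

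The main obstacle I anticipate is establishing the ``tightness'' qualifier. The DKW inequality is tight up to constants for any fixed $\rho_1$, but the learning step relies on the sample-complexity upper bound of Theorem~\ref{dplearning}; inverting it into an $\alpha$--$\beta$ trade-off treats the $\mathcal{O}(\cdot)$ constant $c$ as exact, so tightness in the strict sense would require a matching lower bound for the private-learning rate, or at least a sharp constant. A secondary subtlety is formally justifying the independence of the DKW event and the learning event: both depend on randomness internal to the DP mechanism, so one must condition on $\hat{\Prob}_{\mathcal{C}}$ before multiplying the probabilities and verify that the $m$ sensitivity draws used in the prediction phase are indeed fresh with respect to the $n$ samples used to produce $\hat{\Prob}_{\mathcal{C}}$.
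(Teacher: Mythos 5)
Your proposal follows essentially the same route as the paper's proof: the same triangle-inequality decomposition of $\sup_x(\hat{\Phi}_m(x)-\Phi(x))$ into a DKW sampling error (over the $m$ fresh sensitivity draws) plus a private-learning error (from Theorem~\ref{dplearning}), the same product-of-probabilities combination over the two independent sources of randomness, and the same balancing step --- your choice $\exp(-2m\rho_1^2)=\beta$ is exactly the paper's AM--GM equalization of the two tail terms, which is what produces the squared bound and the $Q$-quadratic. The caveats you raise (the $\mathcal{O}(\cdot)$ constant being treated as exact, and the need to condition on $\hat{\Prob}_{\mathcal{C}}$ before multiplying) are legitimate and are glossed over in the paper as well, which itself resorts to ``ignoring negligible terms'' to land on the stated closed form.
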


\begin{proof}
Using triangle inequality, we have
\begin{align*}
   &\Prob(B_{\rho'})=\Prob\left(\{\sup \limits_{\hat{\Delta}}(\hat{\Phi}_m(\hat{\Delta}) - \Phi(\hat{\Delta})) \leq \rho'\}\right)\\
   &=\Prob\left(\{\sup \limits_{\hat{\Delta}}\left(\hat{\Phi}_m(\hat{\Delta}) - \hat{\Phi}(\hat{\Delta})+\hat{\Phi}(\hat{\Delta}) -\Phi(\hat{\Delta})\right) \leq \rho'\}\right)\\
    &\geq \Prob\left(\{\sup \limits_{\hat{\Delta}}(\hat{\Phi}_m(\hat{\Delta}) - \hat{\Phi}(\hat{\Delta}))\}\leq \rho'-\{\sup \limits_{\hat{\Delta}}(\hat{\Phi}(\hat{\Delta})-\Phi(\hat{\Delta})) \}\right)
\end{align*}
The necessary condition for the last expression, denoted by $E$, to be a non-zero probability is that we have $\rho'>\sup \limits_{\hat{\Delta}} \{\hat{\Phi}(\hat{\Delta})-\Phi(\hat{\Delta}) \}$. To tightly calculate $E$, our idea is to assume that we have $\sup \limits_{\hat{\Delta}} \{\hat{\Phi}(\hat{\Delta})-\Phi(\hat{\Delta}) \}\leq x.\rho'$ (event A) for a real-value $x<1$. We can then write $E\geq \Prob(A)\cdot P(B|A)$, where $B|A$ is the event that $\sup \limits_{\hat{\Delta}} \{\hat{\Phi}_m(\hat{\Delta}) - \hat{\Phi}(\hat{\Delta})\} \leq (1-x).\rho'$.
Moreover, from Theorem~\ref{dplearning}, we can easily show that for a set of $n$ $\epsilon$-DP data samples, we have
\begin{align}
\label{re:ineq1}
  &\Prob(A)\geq 1- exp\left(\frac{N \epsilon c-n \epsilon ((1-x).\rho')^2}{c \epsilon +Nc(1-x).\rho'}\right)  
\end{align}
Moreover, using DKW inequality we have that for all $\rho'\geq \sqrt{\log{2}/(2m)}$, 

\begin{align}
\label{rd:ineq2}
&\Prob(B|A)=\Prob\left(\sup \limits_{\hat{\Delta}} \{\hat{\Phi}_m(\hat{\Delta}) - \hat{\Phi}(\hat{\Delta})\}\leq (1-x) \cdot \rho'\right) \\
& \geq 1-exp\left(-2m(1-x\rho')^2\right) \nonumber
\end{align}
Using the inequality of arithmetic and geometric means, the multiplication of the two probability lower-bounds~\ref{re:ineq1} and \ref{rd:ineq2} together is maximized when the two terms \ref{re:ineq1} and \ref{rd:ineq2} are equal. Solving this equality is equivalent to solving the equality between the two exponent powers, which by some algebraic manipulation and ignoring negligible terms  returns $x=\frac{Q-\sqrt{Q^2
-4}}{2\rho'}$ where $Q=2+\frac{n\cdot \epsilon}{2N\cdot m\cdot c}$. Consequently, the maximum lower-bound of $\Prob(B_{\rho'})$ is
\begin{equation}
\left[1-exp\left(-m\left(Q-\sqrt{Q^2-4}\right)^2/2\right)  \right]^2
\end{equation}
\end{proof}
However, $Q$ has functionality of $m$. To get rid of this functionality, we can introduce the following weaker lower-bound for $\Prob(B_{\rho'})$
\begin{equation}
\left[1-exp\left(-m\left(T-\sqrt{T^2-4}\right)^2/2\right)  \right]^2
\end{equation}
where $T=2+\frac{n\cdot \epsilon}{2N\cdot c}$. Therefore, 
\begin{align*}
   &\Prob_{\{C_t,C'_t,\hat{\Delta} q_1, \cdots, \hat{\Delta} q_m \}}(A_{\hat{\Delta}})\\
&\geq \mathbb{E}(\hat{\Phi}_m(\hat{\Delta})-\rho'|B_{\rho'})\\& \geq (1-\gamma-\rho+\rho'-\rho')\left[1-exp\left(-m\left(T-\sqrt{T^2-4}\right)^2/2\right)  \right]^2\\& \geq (1-\gamma-\rho)(1-\rho)\geq (1-\gamma+\rho-\rho)= 1-\gamma
\end{align*}
The last inequality follows from $\rho<\gamma$; the penultimate inequality follows from $m\geq
            \left \lceil \frac{-2\log(1-\sqrt{1-\rho})}{(T-\sqrt{T^2-4})^2} \right \rceil $. Therefore,
 $$m=\begin{cases}
\left \lceil \frac{\log(1/\rho)}{2(\gamma-\rho)^2} \right \rceil, & \text{learning~\cite{pmlr-v70-rubinstein17a}}\\
            \left \lceil \frac{-2\log(1-\sqrt{1-\rho})}{(T-\sqrt{T^2-4})^2} \right \rceil & \text{Prediction}
		 \end{cases}
$$Finding the expression for $k$, we note that since both $0<x<1$ then   
         \begin{align*}
  & \left[1-e^{\{-2m\rho'^2\}}\right]^2 \geq \left[1-exp\left(-m(T-\sqrt{T^2-4})^2/2\right)  \right]^2\\
  &\geq 1-\rho
\end{align*}   
By some straightforward algebraic manipulation, we have $\rho'\geq \sqrt{-\frac{\log(1-\sqrt{1-\rho}))}{2m}}$, and hence, 
 $$           
k=\begin{cases}
\left \lceil m\big(1-\gamma+\rho+\sqrt{\frac{\log(1/\rho)}{2m}} \right \rceil, & \text{learning~\cite{pmlr-v70-rubinstein17a}}\\
           \left \lceil m\big(1-\gamma+\rho+\sqrt{-\frac{\log(1-\sqrt{1-\rho}))}{2m}} \right \rceil, & \text{Prediction}
		 \end{cases}
$$
\vspace{0.2in}

Finally, to find the optimal $\rho^{*}$ which minimizes $k$ for a fixed $m$ in the prediction phase, the following is the best computed fit function.
$$
\rho^{*}\approx \begin{cases}
 exp(W_{-1}(-\frac{\gamma}{2\sqrt{e}})+0.5), & \text{Learning~\cite{pmlr-v70-rubinstein17a}}\\
            \frac{1.426}{(m+0.8389)^{0.4589}}, & \text{Prediction.}
		 \end{cases}
$$
\end{proof}

\subsection{On the Relationship Between RDP and DP}
\label{RDP vs DP}
\begin{thm}
A Laplace mechanism which calibrates (k,m)-approximate sensitivity $\hat{\Delta}=\hat{\Delta} q_k$, where $k = \ceil{m(1-\gamma + \rho + \hat{\rho})}$, preserves $\epsilon$-DP for all benign records with probability 1.
\end{thm}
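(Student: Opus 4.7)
The plan is to leverage the monotonic disentanglement step that precedes the Laplace noise addition in DPOAD. By construction, the disentangler maps every record to its anomaly score, so a benign record (low anomaly score) lies in the lower tail of the mapped count distribution, while anomalous records are pushed to the upper tail. The sampled sensitivity $\hat{\Delta} = \hat{\Delta}q_k$ is the $k$-th order statistic of $m$ i.i.d.\ draws from this mapped distribution with $k = \lceil m(1-\gamma+\rho+\hat{\rho})\rceil$ very close to $m$, so $\hat{\Delta}$ is a high quantile of the disentangled count distribution.

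First I would formalize ``benign'' in terms of an anomaly-score threshold $\tau$ separating benign from anomalous records, and bound the true sensitivity contribution of any benign record by some $\Delta_b \leq \tau$. Concretely, for any adjacent pair $d,d'$ that differ only in a benign entry, the change in the disentangled count vector is at most $\Delta_b$ by the monotonicity of the disentangler. Second, I would argue that the $k$-th order statistic $\hat{\Delta}q_k$ always dominates $\Delta_b$: because the disentangler monotonically stretches the anomalous portion of the distribution to values well above $\tau$, at least $m-k+1$ of the $m$ drawn sensitivity candidates are almost surely drawn from the anomalous tail (above $\Delta_b$), so the $k$-th smallest of them exceeds $\Delta_b$ deterministically. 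This is where ``with probability $1$'' enters: it is a property of the mapped distribution's support, not of the random sampling.

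Third, with $\hat{\Delta} \geq \Delta_b$ established pointwise, I would invoke Theorem~\ref{thm: Lap mech} directly: the Laplace scale $b = \hat{\Delta}/\epsilon$ satisfies $b \geq \Delta_b/\epsilon \geq \Delta q_{\text{benign}}/\epsilon$, so the output is $\epsilon$-DP with respect to every neighbouring pair $(d,d')$ differing in a benign record. Since this holds for every realisation of the sensitivity sampler, the $\epsilon$-DP guarantee for benign-record neighbours holds with probability $1$, in contrast to the $(1-\gamma)$ guarantee that RDP gives across \emph{all} neighbours.

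The hard part will be justifying the pointwise domination $\hat{\Delta}q_k \geq \Delta_b$ rigorously: an order statistic of a random sample is itself random and, in general, can undershoot a fixed threshold with positive probability. The argument must carefully exploit the specific shape of the disentangled distribution produced by the monotonic disentangler, namely that the mass above $\Delta_b$ is large enough (in proportion to $1-k/m = \gamma - \rho - \hat{\rho}$) that the $k$-th order statistic cannot fall into the benign tail. If this proportion cannot be guaranteed from the disentangler alone, the statement might need to be interpreted conditionally on the distribution-learning event from Theorem~\ref{dplearning} holding, in which case ``probability $1$'' should be read as probability $1$ over the sensitivity sampler's randomness only.
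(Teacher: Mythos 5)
There is a genuine gap in your argument, and it sits exactly where you yourself flag it: the pointwise domination $\hat{\Delta}q_k \geq \Delta_b$. For $m$ i.i.d.\ draws from the learnt disentangled distribution, the number of candidates exceeding a fixed benign bound $\Delta_b$ is $\mathrm{Binomial}(m,p)$ with $p$ the probability mass above $\Delta_b$; unless $p=1$ (which would erase the very benign/anomalous separation you rely on), this count falls below $m-k+1$ on an event of strictly positive probability, so the $k$-th order statistic undershoots $\Delta_b$ with positive probability and no probability-one statement can be extracted from the sampling randomness. What the DKW-type concentration does give you is exactly a $(1-\gamma)$-style guarantee, i.e.\ the RDP statement of Theorem~\ref{DPOADDP}, which is precisely what this theorem is meant to strengthen; so the route as written cannot close without an additional idea.

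The paper closes it by a different and much shorter mechanism that does not compare the order statistic to a benign sensitivity bound at all. Starting from $\hat{\Phi}_m(\hat{\Delta}) \geq 1-\gamma+\rho+\hat{\rho}$, it characterizes benign records as the frequently observed ones whose estimation-error terms satisfy $\gamma \leq \rho+\hat{\rho}$; for such records the index $k=\ceil{m(1-\gamma+\rho+\hat{\rho})}$ saturates at $m$, hence $\hat{\Phi}_m(\hat{\Delta})=1$ and the mechanism is effectively calibrated with the \emph{maximum} sampled sensitivity. The ``probability $1$'' is thus a statement about the quantile index collapsing to $m$ for benign neighbours, not about the support of the disentangled distribution. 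If you want to salvage your route, you would either have to condition on the distribution-learning event of Theorem~\ref{dplearning} (which, as you note, downgrades the claim to high probability), or show that for benign-record neighbours the relevant per-record quantile level is effectively $k=m$ — which is the paper's argument in disguise.
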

\begin{proof}
We have $\hat{\Phi}_m(\hat{\Delta}) \geq 1 -\gamma+ \rho+\hat{\rho}$. Therefore, for all records where $\gamma \leq \rho + \hat{\rho})$, we have $\hat{\Phi}_m(\hat{\Delta})=1$ and $k=m$. Such records are those who have been frequently observed (non-anomalous) and as a result their estimation error are smaller than $\gamma$. 
\end{proof}

\subsection{Utility Analysis}
\label{subsec utility}
Motivated by the analysis presented by Asif et al.~\cite{DBLP:conf/ccs/AsifPV19}, to formally analyze the utility pay-off of DPOAD, we employ the following abstraction of the process in privacy preserving anomaly detection algorithms. Suppose the anomaly identification algorithm is a binary functions $f : D \rightarrow \{0, 1\}$, and let $\mathcal{M}$ be the DP mechanism that is supposed to compute $f$. Now let $x$, $y$ be two databases that differ in one record and that $f (x) = 0$ and
$f (y) = 1$. Asif et al.~\cite{DBLP:conf/ccs/AsifPV19} have shown that the differential privacy constraints create a trade-off, i.e., one of the two following expressions always holds. 
\begin{align}
 & \Prob(\mathcal{M}(x) \neq f (x)) \geq 1/(1 + e^{\epsilon})
\ \ \text{or} \nonumber \\
 &\Prob(\mathcal{M}(y) \neq f (y)) \geq 1/(1 + e^{\epsilon}) \nonumber. 
\end{align}
which means that with an intensity depending on the value of $\epsilon$ (the higher the privacy
requirements are, i.e. for smaller $\epsilon$ the stricter this trade-off is), any DP anomaly identification solution always entails high rate of either false positive or false negative. As a result, this trade-off leads to a very low precision rate for DP anomaly detection algorithms~\cite{DBLP:conf/ccs/AsifPV19}. In the following theorem, we show that by virtue of
 clever sensitivity adjustment and monotonic disentanglement, DPOAD can vastly improve on this trade-off.
\begin{thm}
\label{dpoadutility}
Precision of the DPOAD algorithm is at least 
\begin{align}
  \frac{2+e^{-\epsilon ((m/k)+1)}-e^{-\epsilon}-2e^{-\epsilon m/k}}{2+e^{-\epsilon ((m/k)+1)}-e^{-\epsilon m/k}-2e^{-\epsilon}}  
\end{align}
times better than anomaly detection using Laplace mechanism.
\end{thm}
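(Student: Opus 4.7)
The plan is to build on the precision--privacy trade-off of Asif et al.\ and show that DPOAD one-sidedly relaxes it, controlling false negatives on anomalies without loosening the false-positive bound on benign records. First, I would recall Asif's lemma: for any $\epsilon$-DP binary classifier $\mathcal M$ and neighbouring pair $(x,y)$ with $f(x)=0$, $f(y)=1$, at least one of $\Prob(\mathcal M(x)\neq 0)$ or $\Prob(\mathcal M(y)\neq 1)$ is at least $1/(1+e^\epsilon)$. For the baseline Laplace mechanism, whose fixed global sensitivity treats benign and anomalous records symmetrically, these bounds become simultaneous, giving $\mathrm{FPR}_L \geq 1/(1+e^\epsilon)$ and $\mathrm{FNR}_L \geq 1/(1+e^\epsilon)$ at the tight end.

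Second, I would formalise how DPOAD decouples the two rates. With the $(k,m)$-approximate sensitivity $\hat\Delta = \hat\Delta q_k$ and the monotonic disentanglement map, benign counts (drawn from the bulk of the learnt distribution) are released with noise at least as large as what Laplace would use, so the FPR side inherits the bound $1/(1+e^\epsilon)$; this is exactly the "benign records enjoy $\epsilon$-DP with probability one" corollary of Theorem~\ref{DPOADDP}. Anomalous counts, on the other hand, sit above $\hat\Delta$ in the disentangled co-ordinate and are therefore exposed to Laplace noise scaled by a factor $k/m$ relative to their true sensitivity, i.e.\ they see an effective privacy level $\epsilon' = \epsilon\,m/k$. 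Plugging this asymmetric calibration into Asif's trade-off yields $\mathrm{FNR}_D \geq 1/(1+e^{\epsilon m/k})$ against the unchanged $\mathrm{FPR}_D \geq 1/(1+e^\epsilon)$.

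Third, under a balanced class prior the precision of a binary anomaly detector is $p = (1-\mathrm{FNR})/\bigl((1-\mathrm{FNR})+\mathrm{FPR}\bigr)$. Substituting the two FNR/FPR pairs, writing $u=e^{-\epsilon}$, $v=e^{-\epsilon m/k}$, and using $1-1/(1+1/u) = 1/(1+u)$, I expect the ratio $p_D/p_L$ to reduce to $(1-v)(2-u)/[(1-u)(2-v)]$. Expanding the products $(1-v)(2-u)=2+uv-u-2v$ and $(1-u)(2-v)=2+uv-v-2u$ and re-exponentiating reproduces the stated expression verbatim, so once the FNR/FPR bounds are in hand the simplification is a short algebraic step.

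The hard part will be the second step, namely rigorously justifying that the effective privacy parameter for anomalous records scales \emph{exactly} as $\epsilon\,m/k$ rather than some more intricate quantile-dependent factor. This requires linking the order-statistic index $k$ used by the sensitivity sampler to the ratio between the benign quantile (where $\hat\Delta$ typically lands, by the DKW estimate already invoked in Theorem~\ref{DPOADDP}) and the anomalous extreme, and then arguing via the monotonic disentanglement that this sensitivity ratio transfers linearly into an $\epsilon$-scaling of the Laplace noise actually observed on anomalies. Once that scaling is established, the remainder is the routine algebraic reorganisation sketched above.
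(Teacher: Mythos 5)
Your overall architecture matches the paper's: both start from Asif et al.'s precision--privacy trade-off, both replace the effective privacy level on anomalous records by $\epsilon\,m/k$ via the quantile index of the sensitivity sampler together with the monotonic disentanglement, and both finish with the same algebraic ratio. You have also correctly put your finger on the genuinely weak link: the paper does not rigorously derive the $\epsilon\,m/k$ scaling either --- it simply asserts that ``the quantile level $k/m$ can be seen as an indication of the probability of $\Prob(\mathcal M(y)\neq f(y))$'' and proceeds, so the ``hard part'' you flag is left heuristic in the paper as well.

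However, there is a concrete gap in your step from the error rates to the stated formula. You instantiate the trade-off as two saturated additive bounds, $\mathrm{FPR}\ge 1/(1+e^{\epsilon})$ and $\mathrm{FNR}\ge 1/(1+e^{\epsilon m/k})$. With $u=e^{-\epsilon}$ and $v=e^{-\epsilon m/k}$ these give $1-\mathrm{FNR}=1/(1+v)$ and $\mathrm{FPR}=u/(1+u)$, hence
\begin{align*}
p_D=\frac{1+u}{1+2u+uv},\qquad p_L=\frac{1}{1+u},\qquad \frac{p_D}{p_L}=\frac{(1+u)^2}{1+2u+uv},
\end{align*}
which is \emph{not} $\frac{(1-v)(2-u)}{(1-u)(2-v)}$ (for example $u=0.5$, $v=0.1$ gives roughly $1.10$ versus $1.42$), so the simplification you describe as routine does not go through with your quantities. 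The paper instead reads the trade-off multiplicatively, $\Prob(\mathcal M(x)\neq f(x))\propto e^{\epsilon m/k}\,\Prob(\mathcal M(y)\neq f(y))$, i.e.\ $\mathrm{FNR}=e^{-\epsilon m/k}\cdot\mathrm{FPR}$, parametrizes everything by $x=\mathrm{FPR}=1-p$, argues the precision ratio is monotone in $x$, and evaluates at $x=1$, where the two precisions become $(1-v)/(2-v)$ and $(1-u)/(2-u)$ and the ratio collapses to the stated expression. To recover the theorem you must switch from your additive bounds to this multiplicative coupling (and accept the paper's extremal choice $x=1$); as written, your third step fails to reproduce the claimed constant.
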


\begin{proof}
From~\cite{DBLP:conf/ccs/AsifPV19} (Claim 1 and Figure 1), we have $\Prob(\mathcal{M}(x) \neq f (x))\propto e^{\epsilon} (\Prob(\mathcal{M}(y) \neq f (y)))$. On ther hand, from Theorem~\ref{DPOADDP}, we can tune DPOAD such that $k$ minimizes w.r.t a fixed $m$. Since DPOAD applies monotonic disentanglement on the data values, the quantile level $k/m$ can be seen as an indication of the probability of $\Prob(\mathcal{M}(y) \neq f (y))$. 
As a result, we have
\begin{align*}
& \Prob(\mathcal{M}(x) \neq f (x))\propto e^{\epsilon} (\Prob(\mathcal{M}(y) \neq f (y))), \ \ \ \ \ \ \ \ \ \ \ \text{Laplace}\\
    &       \Prob(\mathcal{M}(x) \neq f (x))\propto e^{\epsilon m/k} (\Prob(\mathcal{M}(y) \neq f (y))), \ \ \ \ \ \  \text{DPOAD}
		 \end{align*}

Denoting by $0\leq p,q\leq 1$, respectively, the true negative and the true positive (complements of the above) probabilities, we have
\[Precision:
\frac{q}{1-p+q}\leq\begin{cases}
\frac{1-e^{-\epsilon}(1-p)}{1-p+1-(e^{-\epsilon}(1-p))} & \text{Laplace}\\
\frac{1-e^{-\epsilon m/k}(1-p)}{1-p+1-(e^{-\epsilon m/k}(1-p))} & \text{DPOAD}
		 \end{cases}
		 \]
It is easy to verify that for all choices of $p$, the DPOAD precision is placed higher because $m/k \geq 1$ and the expression is strictly increasing w.r.t. $m/k$. Hence, let $x=1-p$ and denote by $\mathsf{R}$ the precision ratio between the later and the former, we can easily show that this ratio w.r.t. $x$ is strictly increasing.  
\begin{align*}
  \mathsf{R} \geq \frac{2+e^{-\epsilon ((m/k)+1)}-e^{-\epsilon}-2e^{-\epsilon m/k}}{2+e^{-\epsilon ((m/k)+1)}-e^{-\epsilon m/k}-2e^{-\epsilon}}
\end{align*}
\end{proof}
\vspace{-0.03in}
where the expression on the left is ``the worst case'' improvement achieved at $x=1$. This bound is extremely loose lower-bound since we have assumed a ``linear relation'' between $k/m$ and strictness of the FP/FN tradeoff, while DPOAD uses a more granular PDF for more accurate TP predictions as shown in our experiments. 
\subsection{Discussion}
\label{sec:R$^2$DP as a
  Patch to Existing Work}
  
Finally, we discuss some important aspects of DPOAD which are derived from our results in Sections~\ref{sec5} and \ref{sec:exp}.


\subhead{Tuning $n$ in the Learning Phase}
What number of samples is sufficient to stop the learning phase and switch to the prediction phase? Parameter $n$, as shown in Theorem~\ref{DPOADDP}, affects parameters $k$ and $m$, which ultimately impact the quality of anomaly detection. In particular, if in the learning phase the underlying distribution is not learnt accurately due to a small number of samples ($n \downarrow \rightarrow T\downarrow \rightarrow m \downarrow$), we have $k \approx m$ which results in inaccurate anomaly detection (refer to Utility Analysis~\ref{subsec utility}). Therefore, it is important to ensure that a sufficient number $n$ of samples is aggregated by the MSSP to certifiably benefit from accurate anomaly detection. Figure~\ref{kmplot} (b) depicts a picture of the utility pay-off of DPOAD for various privacy parameters if a sufficiently large number of samples is used for learning the distribution of the dataset. 

\subhead{Tuning the Sensitivity Quantile Levels $k/m$}
 Figure~\ref{kmplot} (a) depicts the sensitivity quantile levels $k/m$ in both the learning and the prediction phases for different RDP parameter $\gamma$. This figure clearly shows this important parameter in the learning phase (Pain-Free) is negligibly better than in the prediction phase. This is expected since in the prediction phase, sensitivity sampling is conducted according to a learnt distribution over noisy observations which obviously entails less confidence and consequently, prevails a more conservative quantile value. Nonetheless, our experiments show that it is good enough for DPOAD to operate significantly better than the Pain-Free thanks to its monotonic disentangling function.
 
 \vspace{-0.15in}
 
 \begin{figure}[!tbh]
	\centering
	\subfigure[]{
		\includegraphics[width=0.45\linewidth, viewport=17 215 550 602,clip]{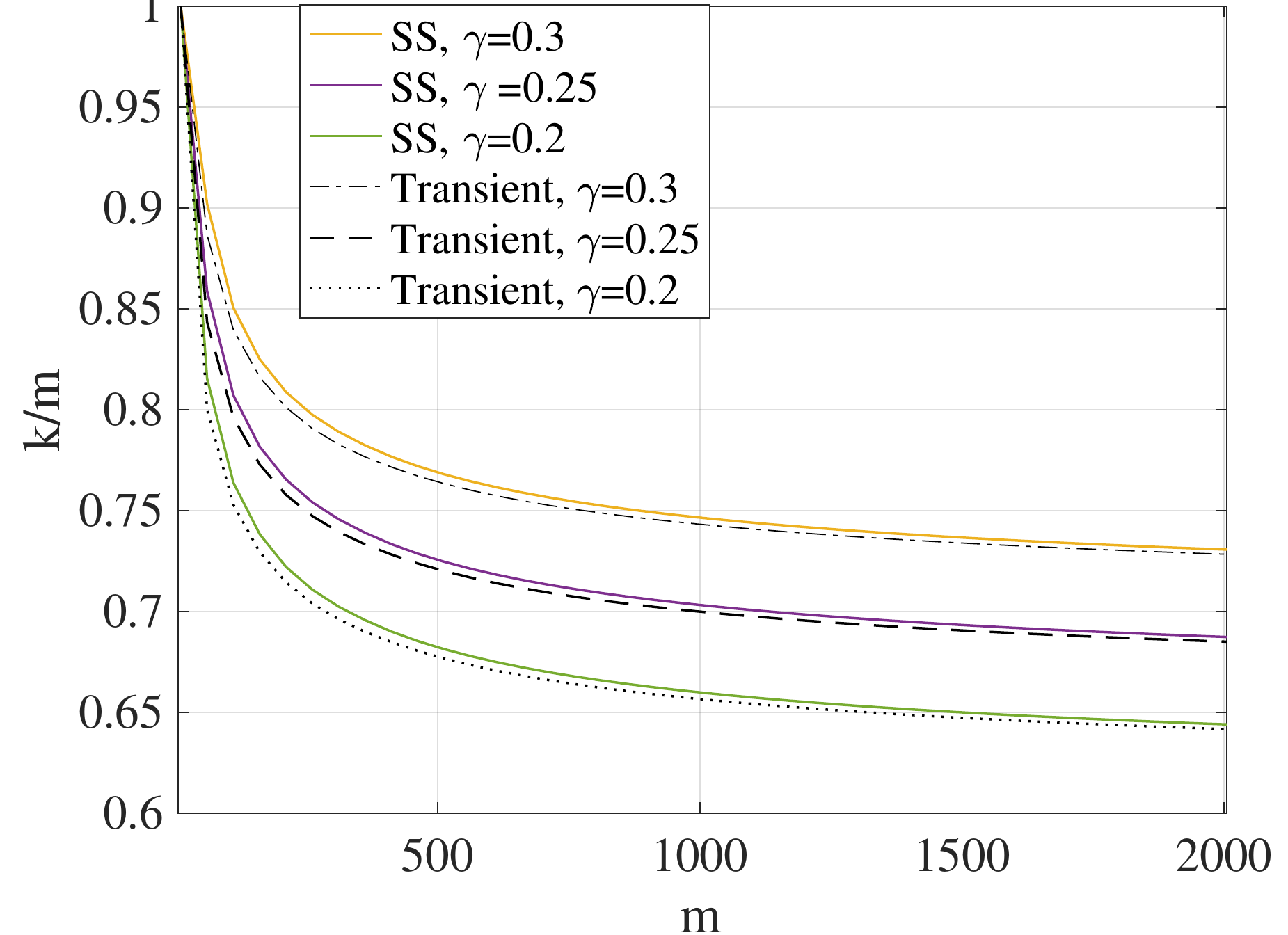}
		\label{fig:5Pre} }
	   \hspace{-0.15in}
	\subfigure[]{
		\includegraphics[width=0.45\linewidth, viewport=17 180 545 601,clip]{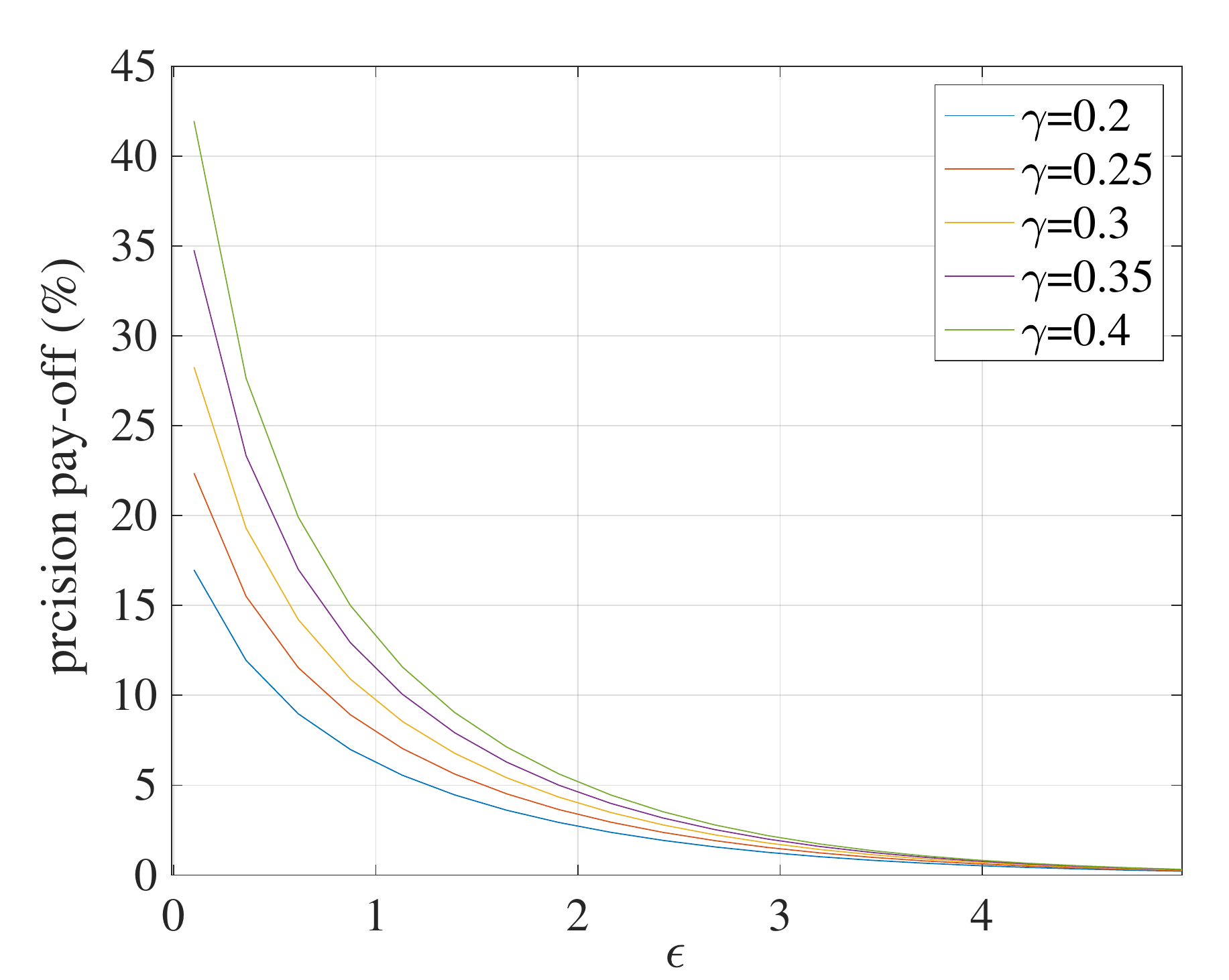}
		\label{fig:5Re}}
		\vspace{-0.15in}
	\caption{The (a) sensitivity quantile level $k/m$ in the Prediction and the Learning (Pain-Free) phases, and (b) precision pay-off (\%) in DPOAD for different $\epsilon$ and $\gamma$ parameters.}\vspace{-0.15in}
	\label{kmplot}
\end{figure}


\vspace{0.1 in}
\subhead{DPOAD in Other Statistical Inferences}
\label{Application}

Besides application of DPOAD in many use cases including IoT, fraud detection and resource/budget allocations (see the experiments), it can be used to compute various interesting data analysis functions in the outsourcing setting. In particular, DPOAD can be deployed to enable a specific type of analysis through iteratively constructing a monotonic disentangled version of the dataset according to what algorithm demands. For instance, in the heavy hitter (HH) computation, DPOAD can be applied with minor modifications. Specifically, MSSP at each iteration computes the distribution of the dataset and sends the heavy hitter scores back to the data owner. The monotonic disentangler can be tuned using these scores to preserve the privacy of all non-HH entries while enabling accurate HH identification. Other potential algorithms include SQL queries, clustering and optimal distribution learning.

\subhead{DPOAD and Non-Interactive DP}
Instead of histogram, we can apply DPOAD to the DP raw data publishing algorithms, namely non-interactive solutions. Unfortunately, the state-of-the-art is still struggling to find accurate and scalable non-interactive solutions and most of existing approaches usually result in significant utility loss~\cite{chen2011publishing}. 
DPOAD can be applied to boost utility for practical non-interactive solutions.

\subhead{Efficiency Analysis}
Table~\ref{relatedss1} summarizes the computational and communication complexity for the main action items of DPOAD. 
\vspace{-0.3in}
\begin{table}[ht]
	\caption{Complexities of DPOAD's main functions}\vspace{-0.1in}
	\label{relatedss1}
	\centering
	\begin{adjustbox}{max width=3in}
    \LARGE
		\begin{tabular}{|c|c|c|}
			\hline
			Function & Computation & Communication \\
			\hline

			\multirow{2}{*}{Distribution Learning} & \multirow{2}{*}{$O(n\log{N})$} & $\mathcal{O}((N + \log(1/\beta))/\alpha^2 $\\
	&&	$+ N \log(1/\beta)/(\epsilon \alpha))$	\\\hline
 Monotonic Disentanglement & $O(n log^{n})$& ---\\
			\hline
Pain-Free Algorithm &$O(n log^{n})$& ---\\
\hline
Anomaly Detection (KS)& $O(n^2)$ & ---\\
\hline
		\end{tabular}
	\end{adjustbox}
\end{table}


\section{Experimental Evaluations}
\label{sec:exp}


In this section, we experimentally evaluate the performance of DPOAD using a diverse collection of well-known datasets, including the parking Birmingham dataset \cite{DBLP:conf/smartct/StolfiA017}, individual household electric power consumption dataset \cite{electricCon}, credit card clients dataset \cite{DBLP:journals/eswa/YehL09a} and KDDCup99 dataset \cite{KDD99cup}. 

More specifically, the parking dataset consists of the occupancy rates of 31 parking lots in Birmingham for 2 months in 2016. The individual household electric power consumption dataset contains around 2 million power consumption measurements in 47 months from 2006 to 2010 in France.
The credit card clients dataset consists of customers default payments in Taiwan in 2005. The KDDCup99 dataset consists of ``good'' and ``bad'' network connections simulated in a military network environment. The anomaly detection in our experiments is based on the histogram combination of all the attributes in every dataset. Table~\ref{Table:dataset} summarizes the characteristics of the datasets.

\vspace{-0.1in}

\begin{table}[!h]
\small
\caption{Characteristics of the experimental datasets}\vspace{-0.1in}
\centering
\begin{adjustbox}{width=0.45\textwidth,center}
\begin{tabular}{|l|c|c|}
\hline
\bfseries{Databset}	&	\bfseries{Size}	&	\bfseries{\# of Attributes}		\\ \hline
							
Parking Birmingham~\cite{DBLP:conf/smartct/StolfiA017}	&	35,718	&		4		\\ \hline
Electric consumption~\cite{electricCon}	&	2,075,259	&		9	\\ \hline
Credit card clients~\cite{DBLP:journals/eswa/YehL09a}	&	30,000	&		23		\\ \hline
KDDCup99~\cite{KDD99cup}	&	494,021	&	42		\\ \hline
\end{tabular}
\end{adjustbox}
	\label{Table:dataset}
\end{table}

\vspace{-0.2in}

\subsection{Experimental Setting}

We perform all the experiments and comparisons on both privacy parameters ($\epsilon$ and $\gamma$) and anomaly detection parameters (the Threshold \textbf{T} and the sensitivity $\Delta q$). DPOAD is benchmarked with the Laplace mechanism and the recent Pain-Free mechanism for differential privacy.\footnote{\cite{DBLP:conf/cscml/BittnerSW18, bohler2017privacy,DBLP:conf/pkdd/OkadaFS15,DBLP:conf/ccs/AsifPV19, DBLP:conf/coinco/AsifTVSA16} can only disclose analysis results, i.e., detected anomalies (rather than data). Thus, they are incomparable with DPOAD.} We apply the original anomaly detection algorithm on all the datasets to obtain the baseline results in all the experiments for evaluating the accuracy of DPOAD and benchmarks. 




\begin{figure*}[ht]
	\centering
	\subfigure[Parking]{
		\includegraphics[angle=0, width=0.22\linewidth]{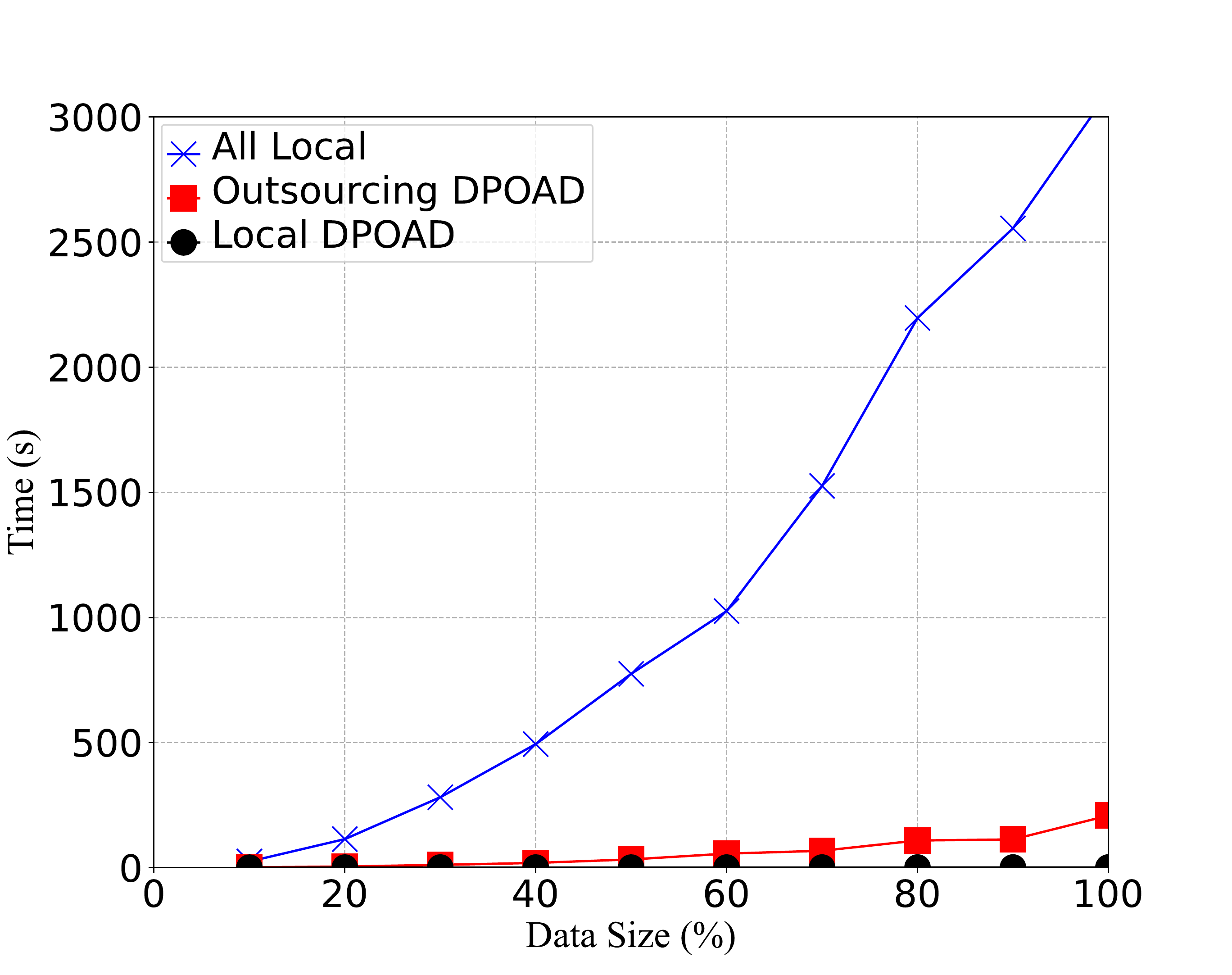}
		\label{fig:1_time} }
		\hspace{-0.22in}
	\subfigure[Electric Consumption]{
		\includegraphics[angle=0, width=0.22\linewidth]{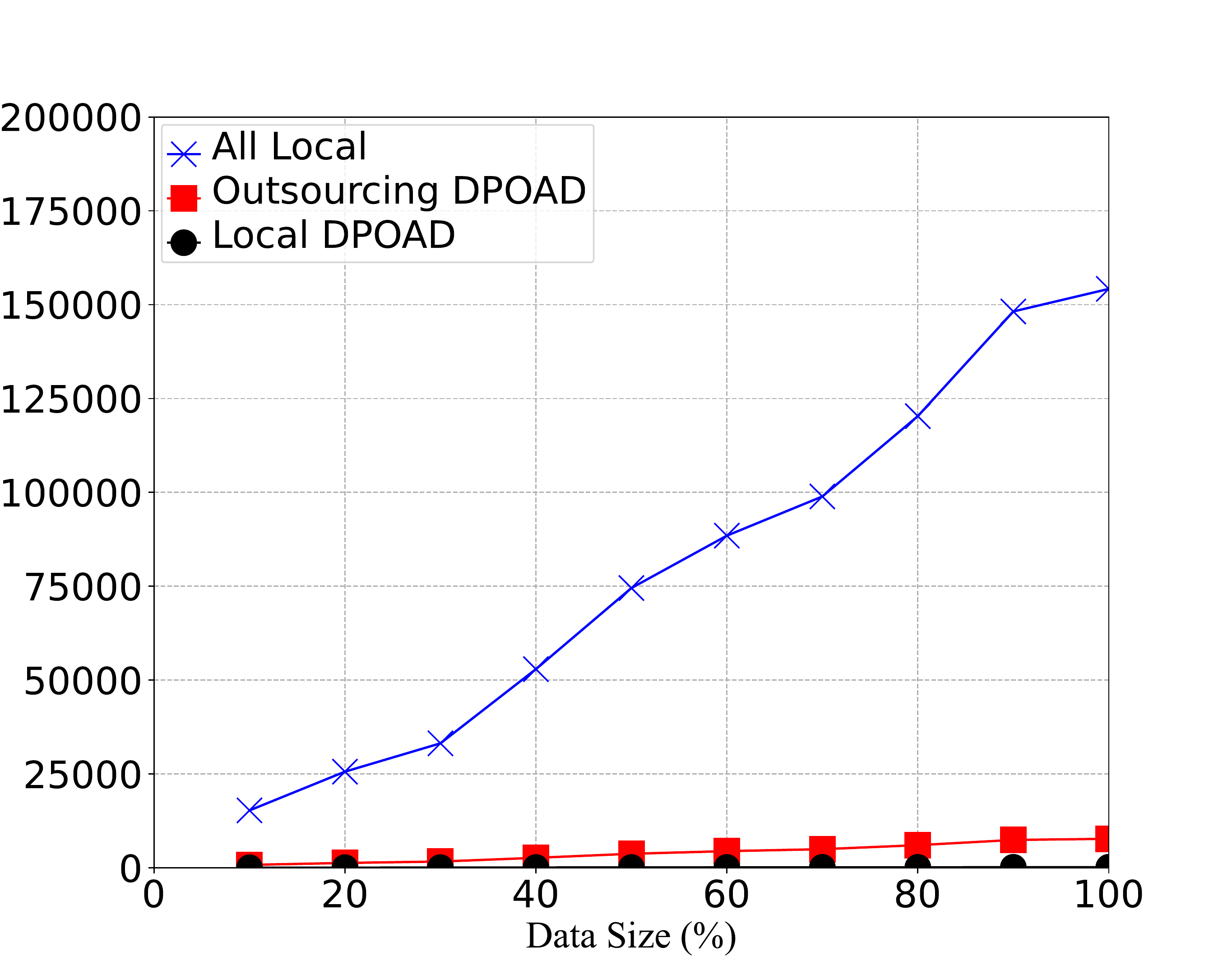}
		\label{fig:1_gam_Re}}
		\hspace{-0.22in}
	\subfigure[Credit Card]{
		\includegraphics[angle=0, width=0.22\linewidth]{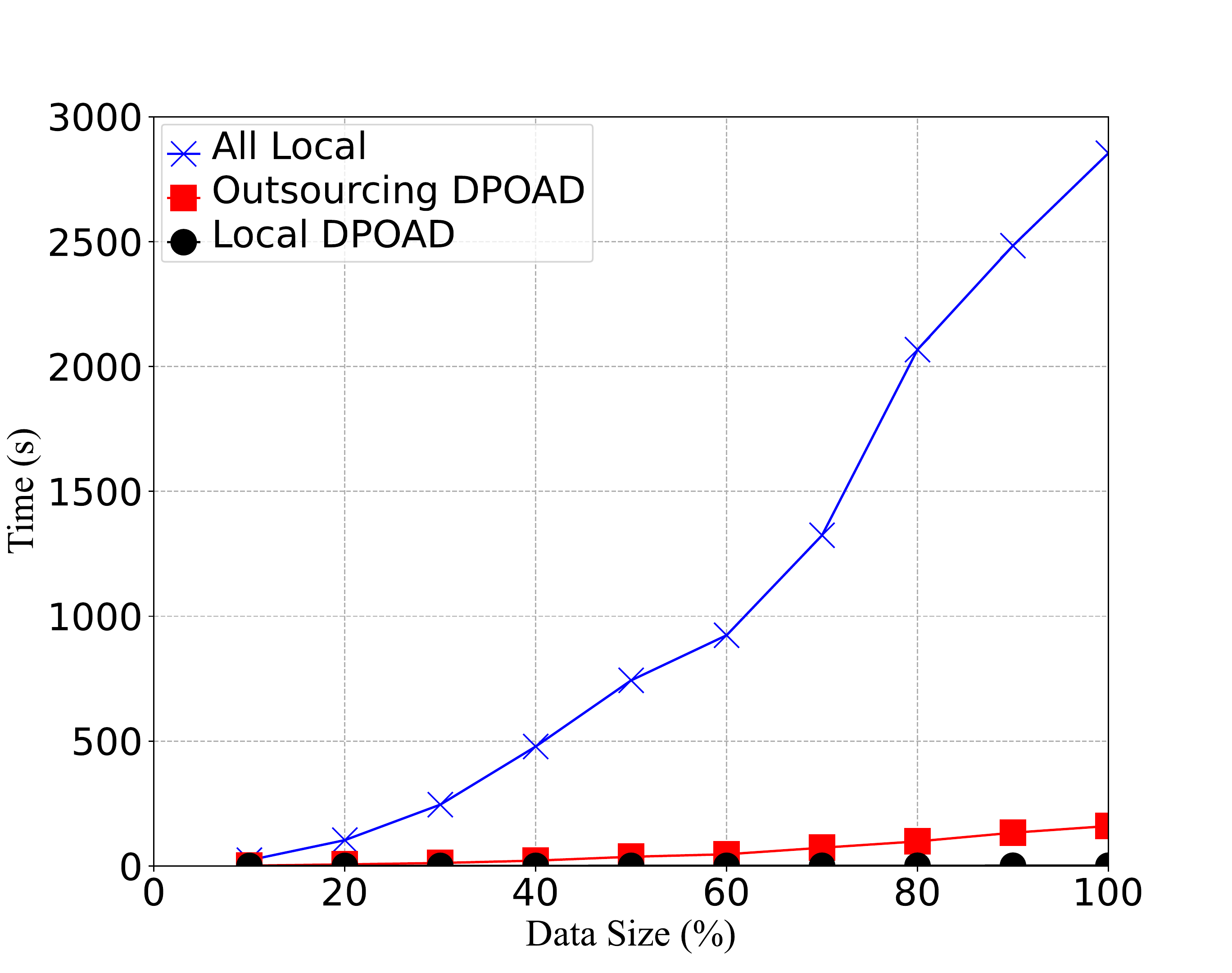}
		\label{fig:3_time} }
		\hspace{-0.22in}
	\subfigure[KDDCup99]{
		\includegraphics[angle=0, width=0.22\linewidth]{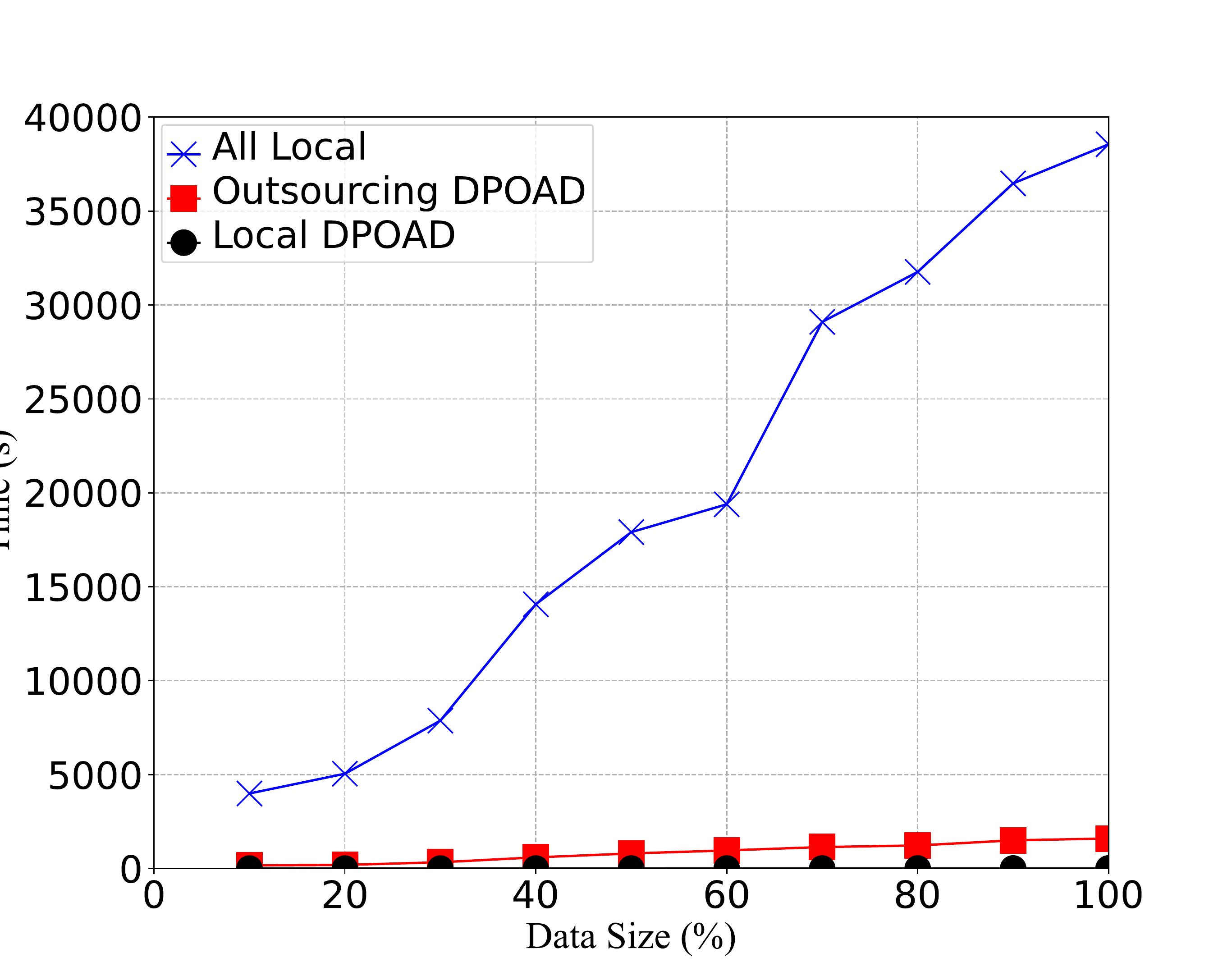}
		\label{fig:2_gam_Re}}
		\hspace{-0.22in}
	\caption{Time Performance Comparison of Local Setting and DPOAD (Outsourcing).}
	\label{figure:time}
\end{figure*}
\subhead{Definition of Anomalies}
In all the experiments, the MSSP runs anomaly detection based on Kolmogorov-Smirnov (KS) test~\cite{AnomalyMichael}. It is widely used in anomaly detection~\cite{moustafa2016evaluation,justel1997multivariate} as a reliable score-based anomaly detection. Specifically, it defines and compares the distribution of the training and testing sets. The randomly sampled training data can be transformed to a uniform distribution. Then, we apply the same transformation as defined in training data to test if the testing data points fall in the same uniform distribution on the $p$-dimensional hybercube. Any data point falls outside of the distribution would be considered as anomalies in this case. In general, the KS test evaluates the null hypotheses that the class labels predicted by a classifier are no better than random~\cite{BRADLEY2013470}.

\subhead{Privacy Parameters} 
In the first set of our experiments, we examine the accuracy of DPOAD on varying privacy parameters $\epsilon$ and $\gamma$, using two well-known accuracy metrics: precision and recall. In this group of experiments, we set the anomaly detection threshold as 0.9 and the number of analysis iterations as 5, respectively. 

In the initialization, we first sample the sensitivity $\Delta q$ from the uniform distribution defined over all possible values of a dataset. In the next iteration, we then update the underlying distribution using the anomaly scores computed by an MSSP and sample the new sensitivity with updated distribution. 



\subhead{Anomaly Detection Parameters}
In the second set of our experiments, we investigate how DPOAD behaves under different detection thresholds \textbf{T} and iterations from an MSSP. The threshold is the parameter that shows the abnormal rate in a dataset, which would affect the utility. Since the sample sensitivity $\Delta q$ in each iteration from the MSSP should be updated, the corresponding noise would then be different in each iteration. We run all experiments in the second set by fixing the privacy budget ($\epsilon=1$ and $\gamma=0.05$) and tuning anomaly detection parameters. 

\subhead{Platform} 
The experiments on the cloud (outsourced computation in DPOAD) were performed on the NSF Chameleon Cluster with Intel(R) Xeon(R)Gold 6126 2.60GHz CPUs and 192G RAM. The experiments under local setting were performed on an HP PC with Intel Core i7-7700 CPU 3.60GHz and 32G RAM. 
\subsection{Computation Advantage}
 The first group of experiments are conducted to evaluate the runtime of the algorithms by comparing the local anomaly detection and DPOAD (local noise injection and outsourced computation). We run all the algorithms on different percents of the overall data (the same four datasets). Figure \ref{figure:time} shows that the local computation can be significantly reduced if outsourcing the data to the cloud. Since the anomaly detection in the cloud is completed by multiple computing nodes with parallel computation, the runtime by the cloud is also significantly reduced compared to performing local anomaly detection. 

\begin{figure*}[!tbh]
	\centering
	\subfigure[Precision vs $\epsilon$]{
		\includegraphics[angle=0, width=0.22\linewidth]{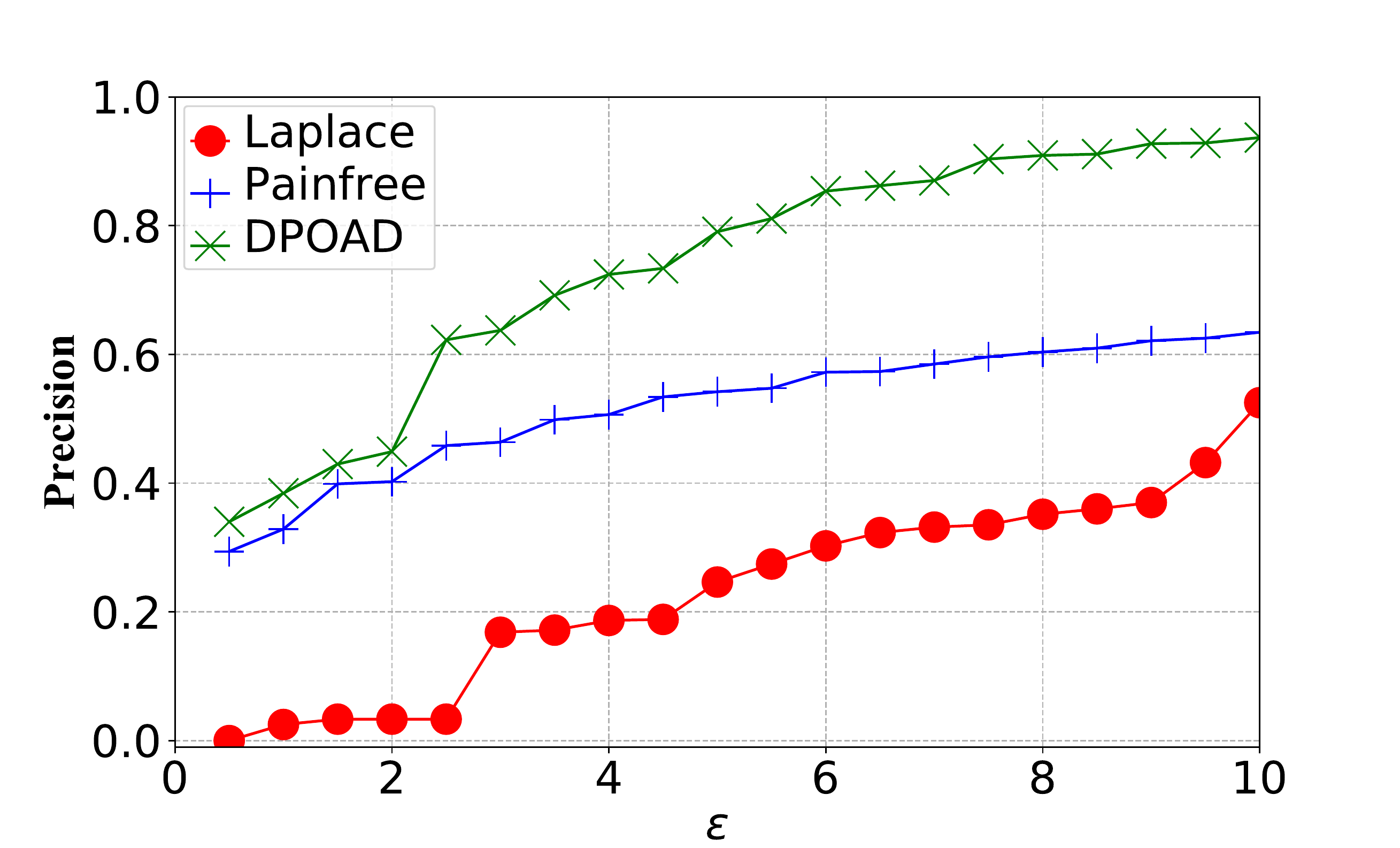}
		\label{fig:1Pre} }
		\hspace{-0.22in}
	\subfigure[Recall vs $\epsilon$]{
		\includegraphics[angle=0, width=0.22\linewidth]{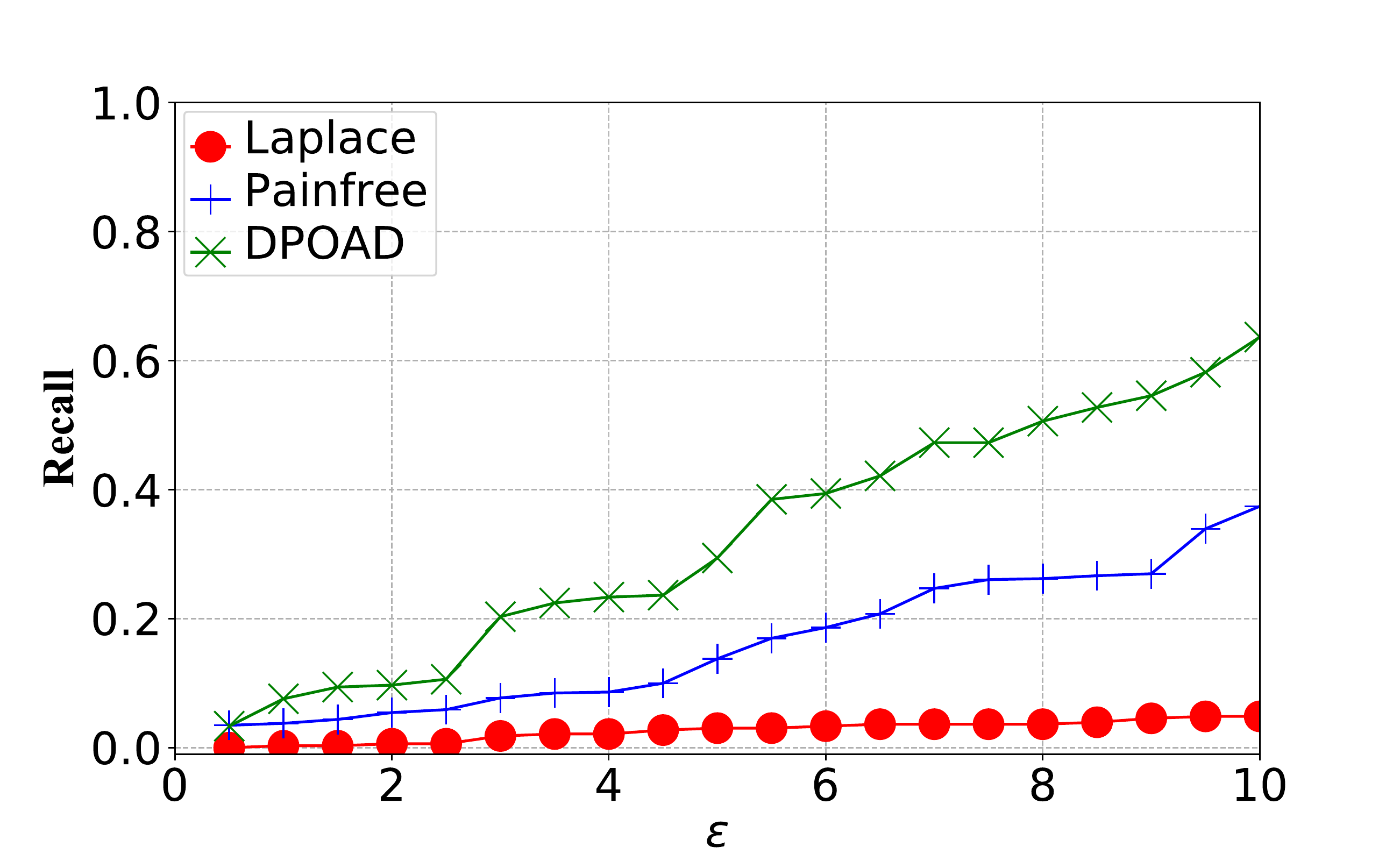}
		\label{fig:1Re}}
		\hspace{-0.22in}
	\subfigure[Precision vs $\epsilon$]{
		\includegraphics[angle=0, width=0.22\linewidth]{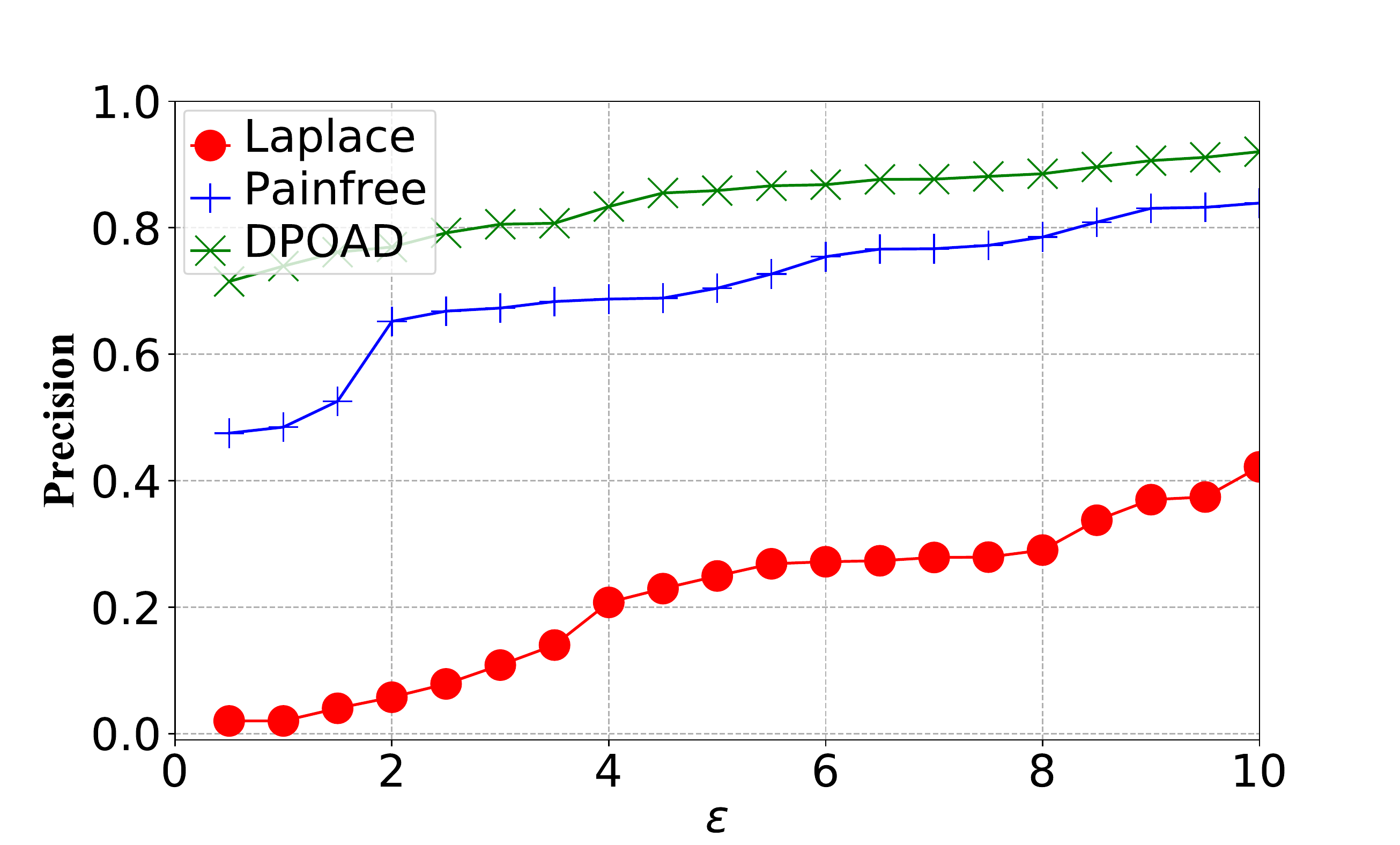}
		\label{fig:2Pre}}
		\hspace{-0.22in}
	\subfigure[Recall vs $\epsilon$]{
		\includegraphics[angle=0, width=0.22\linewidth]{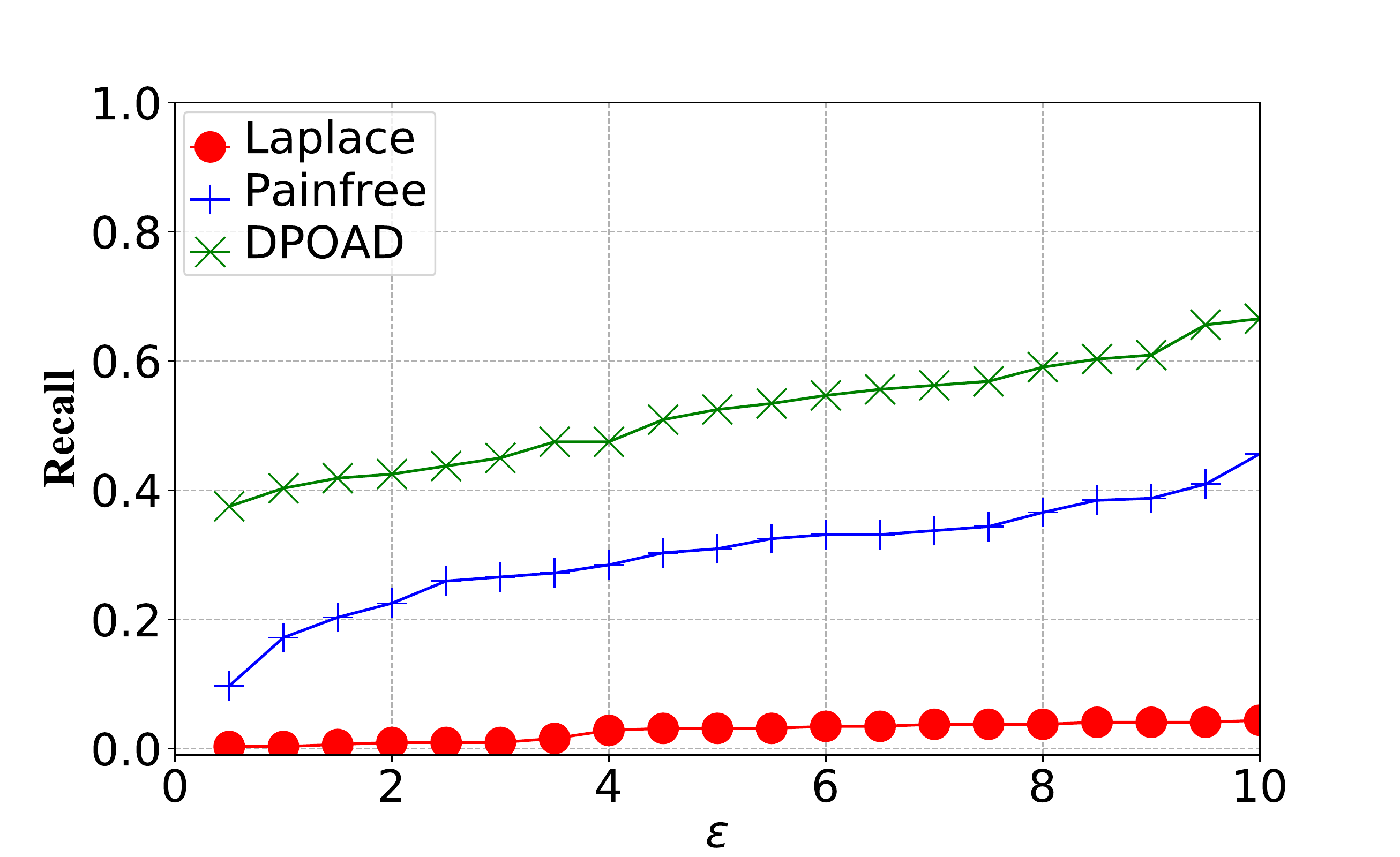}
		\label{fig:2Re}}
		\hspace{-0.22in}
	\subfigure[Precision vs $\epsilon$]{
		\includegraphics[angle=0, width=0.22\linewidth]{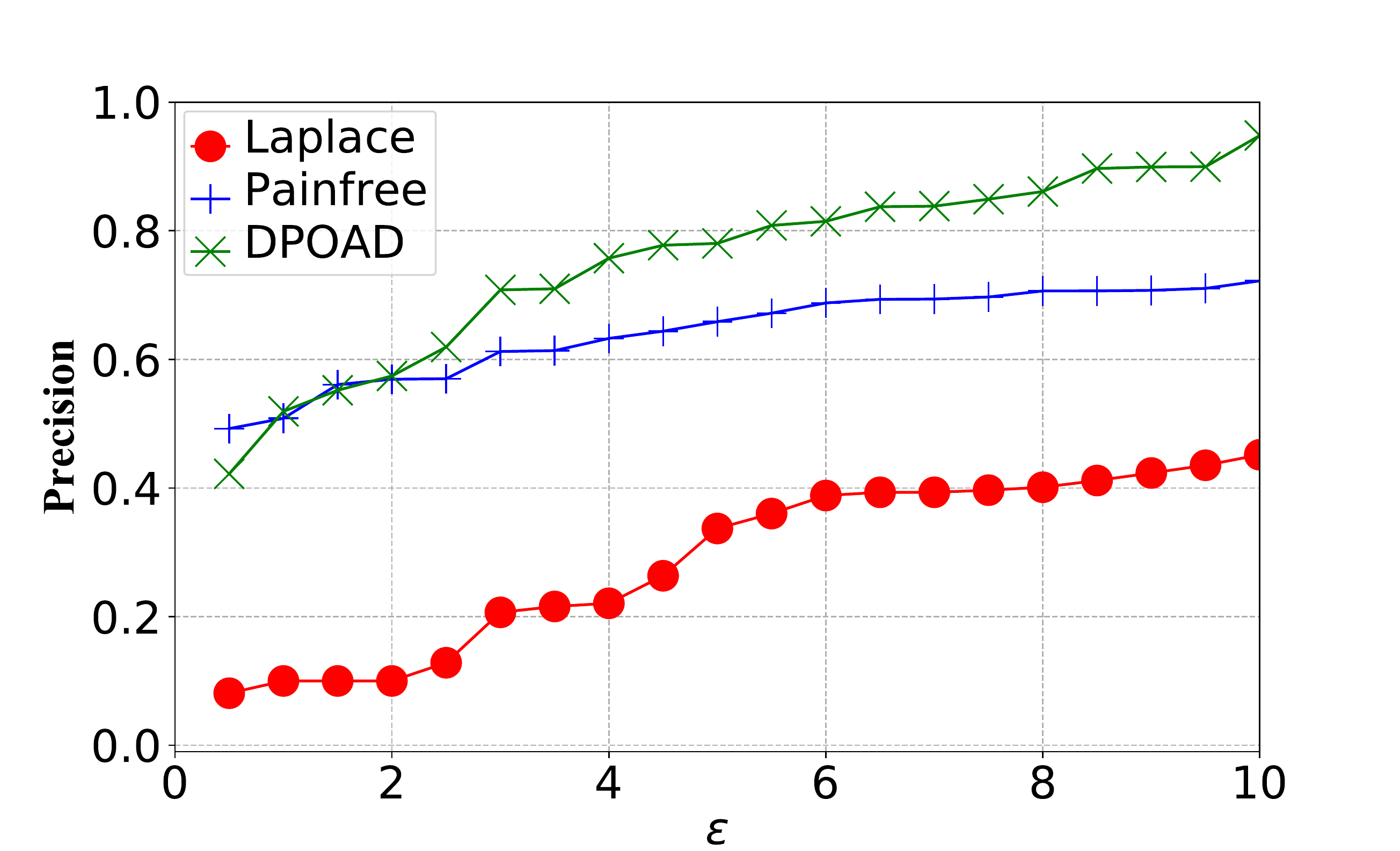}
		\label{fig:4Pre} }
		\hspace{-0.22in}
	\subfigure[Recall vs $\epsilon$]{
		\includegraphics[angle=0, width=0.22\linewidth]{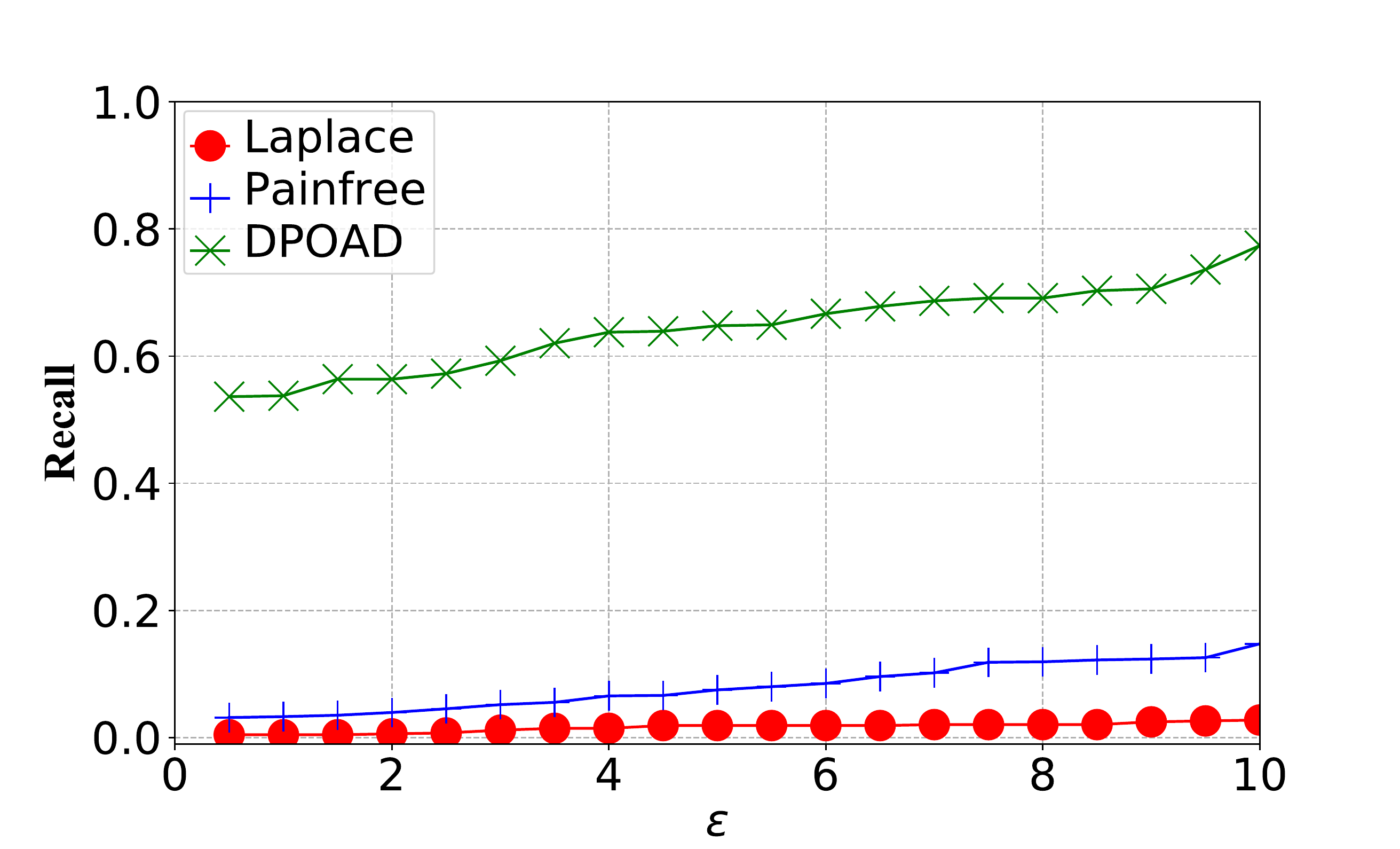}
		\label{fig:4Re}}
		\hspace{-0.22in}
	\subfigure[Precision vs $\epsilon$]{
		\includegraphics[angle=0, width=0.22\linewidth]{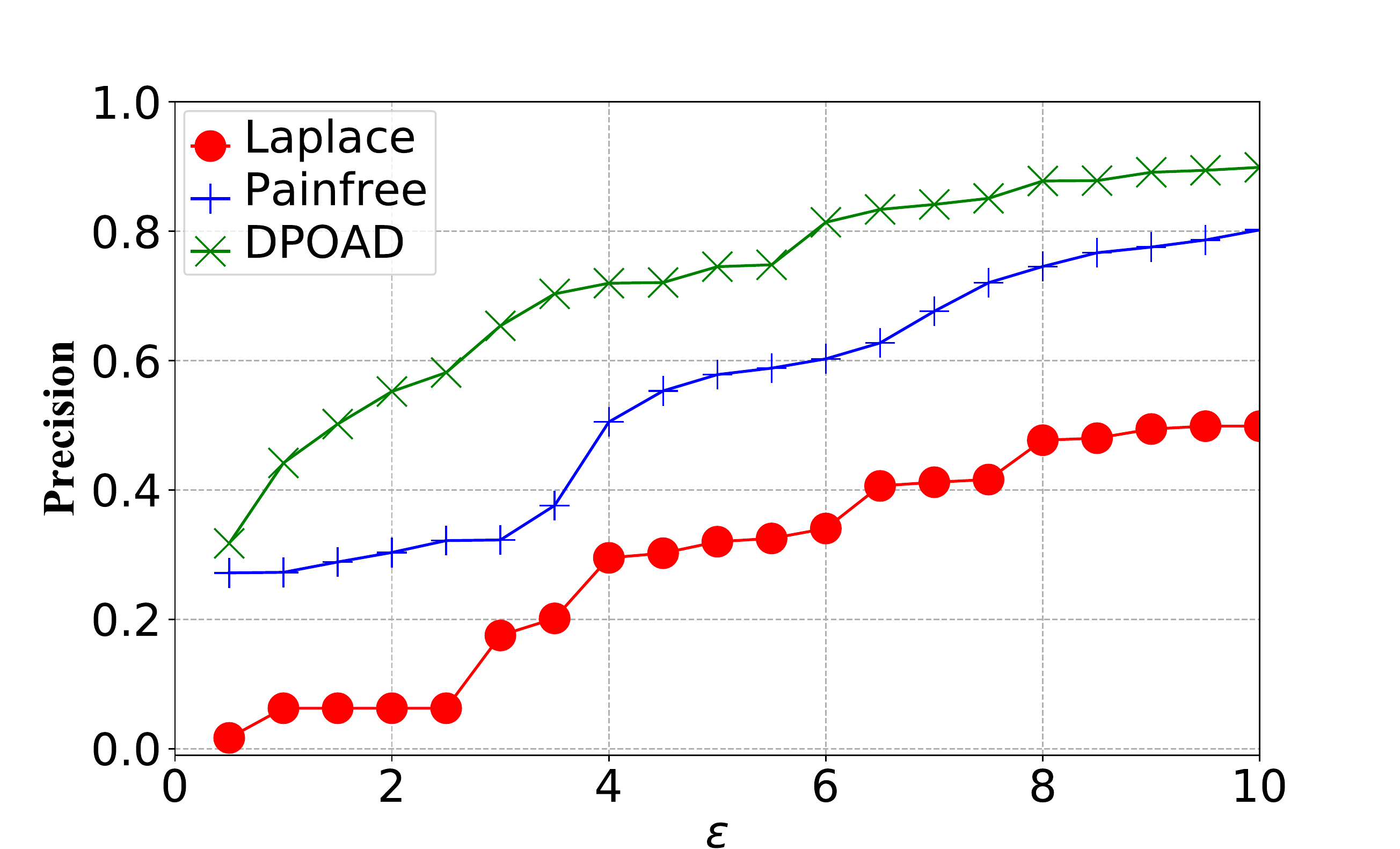}
		\label{fig:5Pre} }
	   \hspace{-0.22in}
	\subfigure[Recall vs $\epsilon$]{
		\includegraphics[angle=0, width=0.22\linewidth]{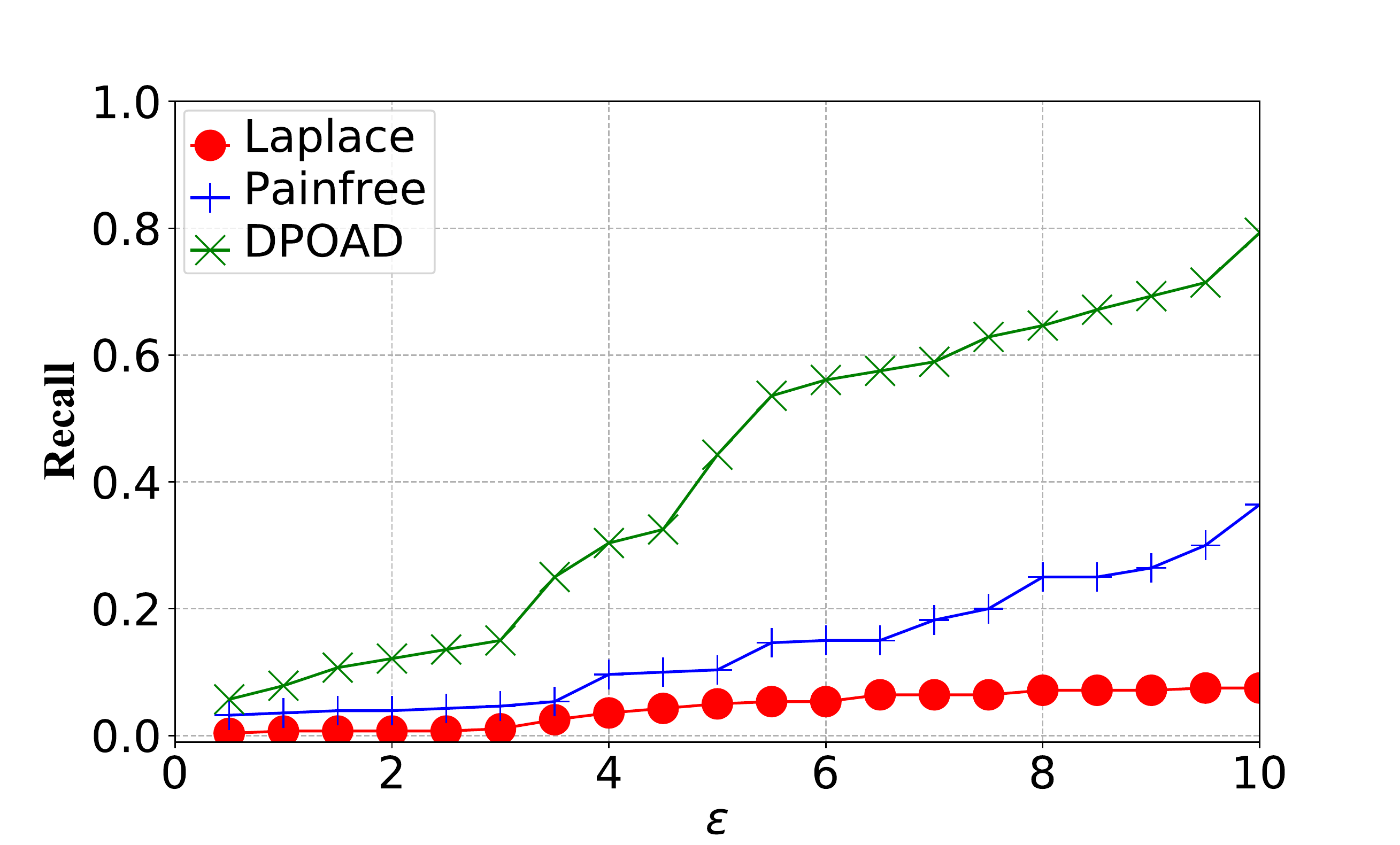}}
		\vspace{-0.15in}
	\caption{Precision and Recall vs $\epsilon$ on the parking dataset (a,b), electric consumption dataset (c,d), credit card clients dataset (e,f) and KDDCup99 dataset (g,h).}\vspace{-0.1in}
	\label{fig:eps}
\end{figure*}

\subsection{Privacy Parameters} \label{subsec:privacyresults}

We study the effectiveness of DPOAD by varying the privacy parameters, i.e., $\epsilon$ and $\gamma$ in all experiments is fixed to $0.1$ which regards to the accuracy of the PDF estimator in the learning phase. The precision and recall values shown in the figures are calculated based on the baseline results (the results are obtained without any injected noise). In the studied accuracy metrics, recall shows the sensitivity of the approaches, while precision demonstrates the ratio of correctly identified anomalies.


\subsubsection{Accuracy vs Privacy Budget $ \epsilon$}
For each dataset, as described in Table~\ref{Table:dataset}, we evaluate the accuracy for anomaly detection based on two metrics by varying $\epsilon$ ($\gamma$ is fixed as 0.3 in this set of experiments). Laplace mechanism has been used as the baseline in all the studies. Overall, the value of metrics increases when the privacy budget increases. 
Specifically, we have the following observation:

\begin{itemize}

    \item In Figure~\ref{fig:eps}, the DPOAD-based anomaly detection is significantly more accurate than Laplace and Pain-Free with the same privacy budget for all the datasets. The precision and recall results are much higher than the benchmarks. This is because the noise injected to anomalies by DPOAD is much smaller, and thus more anomalies can be accurately identified.
    

    
    \item Similar to the two other solutions, the accuracy of DPOAD-based anomaly detection increases as the privacy protection becomes weaker (larger $\epsilon$), and
    the accuracy also relies on the characteristics of the datasets. For example, compared to other datasets the records in both credit card clients are more homogeneous, which lead to slightly smaller recall values in Figure~\ref{fig:4Re}. It is worth noting that the recall values can be simply improved by applying better anomaly detection algorithms w.r.t. such type of datasets. 

\end{itemize}

\subhead{Summary} DPOAD achieves much better precision and recall results than two benchmarks on all the datasets. In both strong and weak privacy cases, DPOAD identifies anomalies more accurately.

\begin{figure*}[!tbh]
	\centering
	\subfigure[Precision vs Threshold]{
		\includegraphics[angle=0, width=0.22\linewidth]{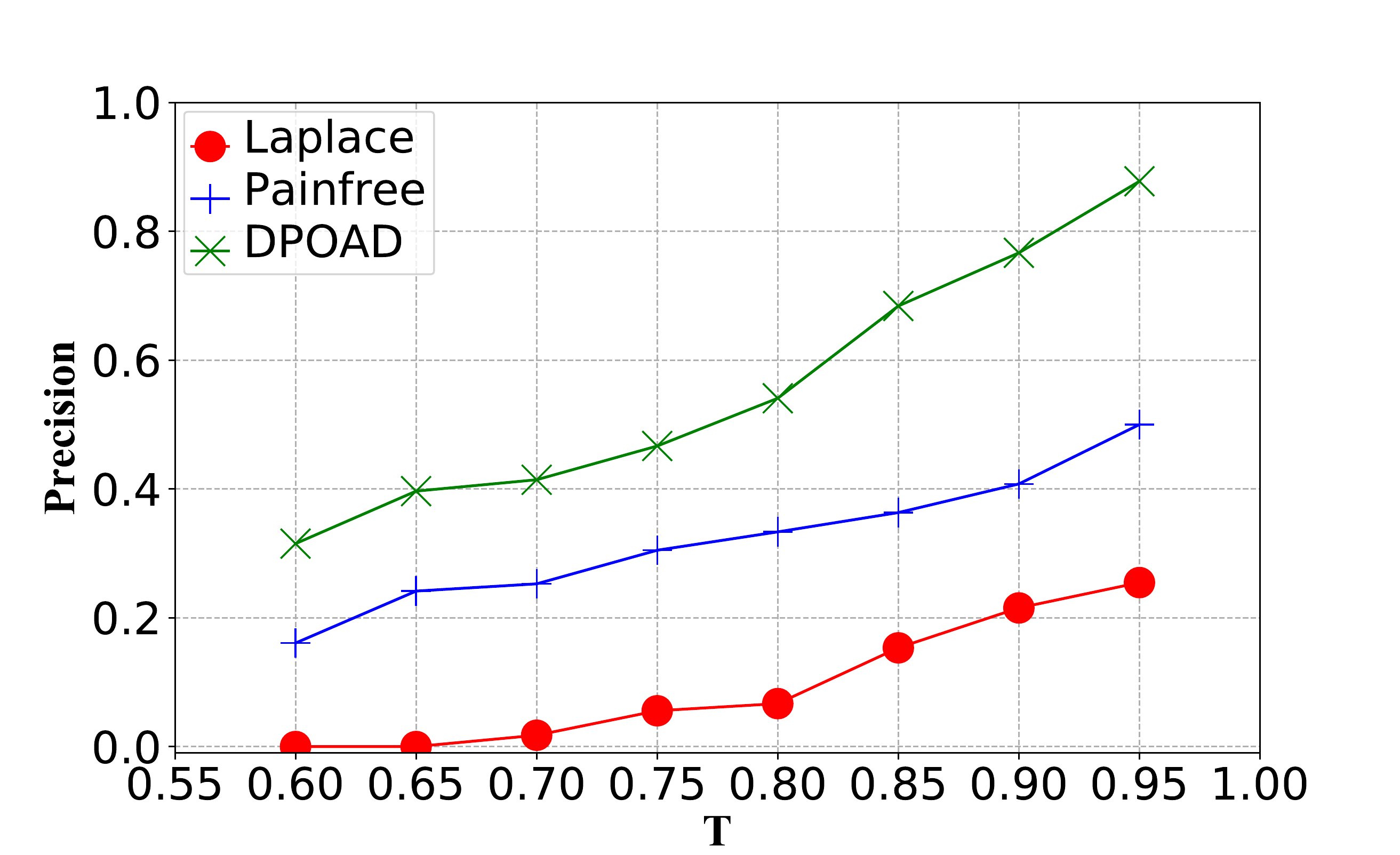}
		\label{fig:1_T_Pre} }
		\hspace{-0.22in}
	\subfigure[Recall vs Threshold]{
		\includegraphics[angle=0, width=0.22\linewidth]{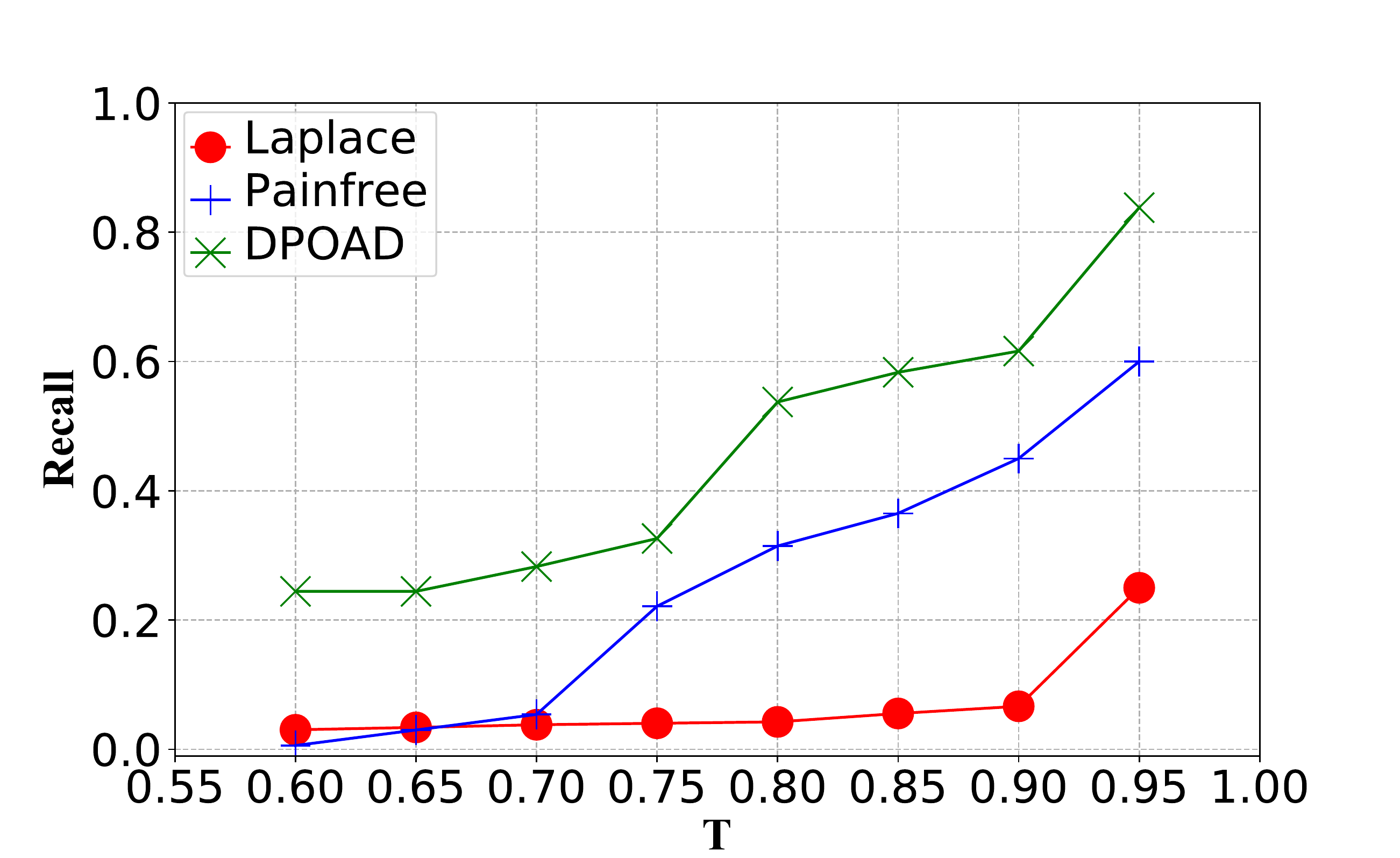}
		\label{fig:1_T_Re}}
		\hspace{-0.22in}
	\subfigure[Precision vs Threshold]{
		\includegraphics[angle=0, width=0.22\linewidth]{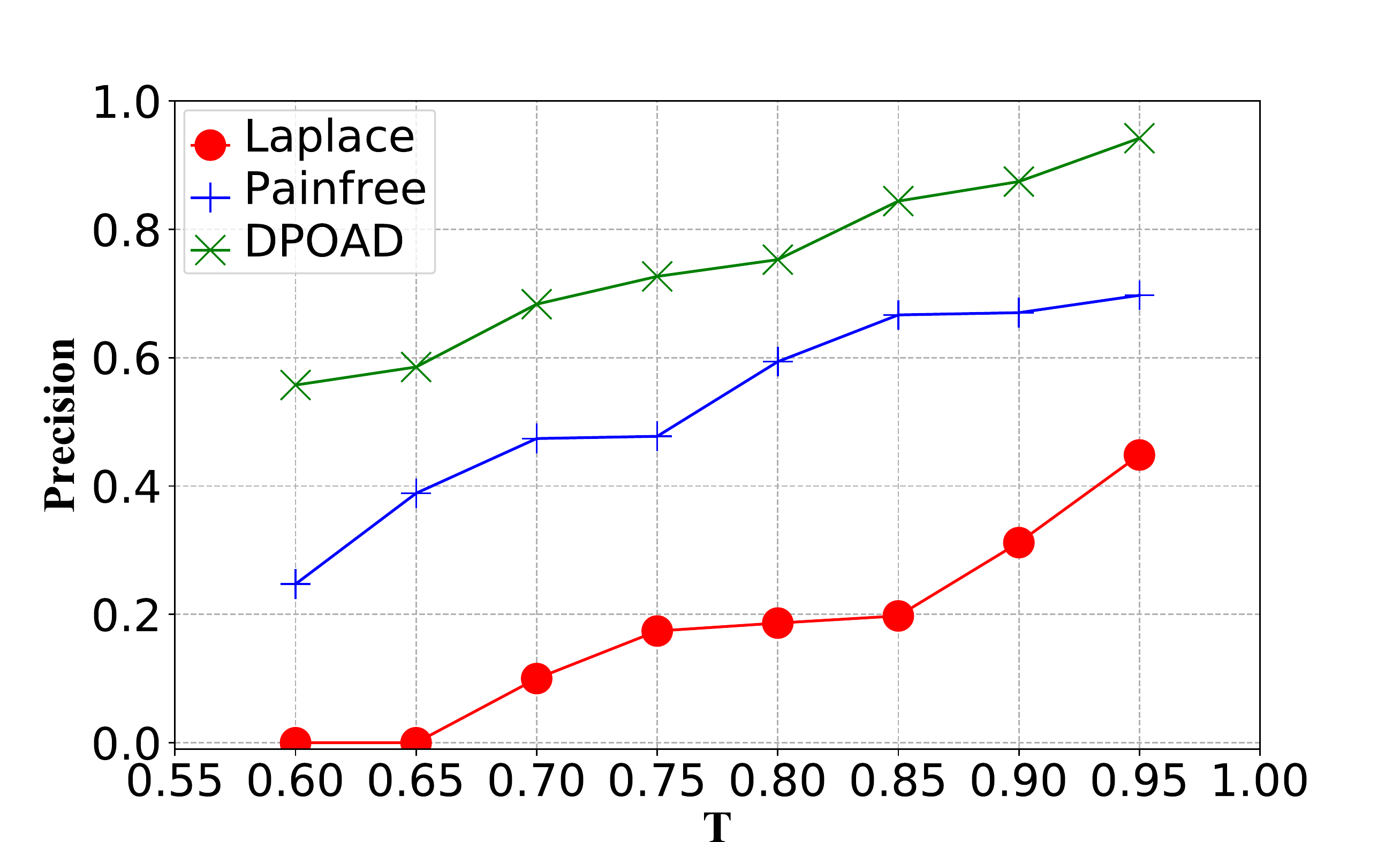}
		\label{fig:2_T_Pre} }
		\hspace{-0.22in}
	\subfigure[Recall vs Threshold]{
		\includegraphics[angle=0, width=0.22\linewidth]{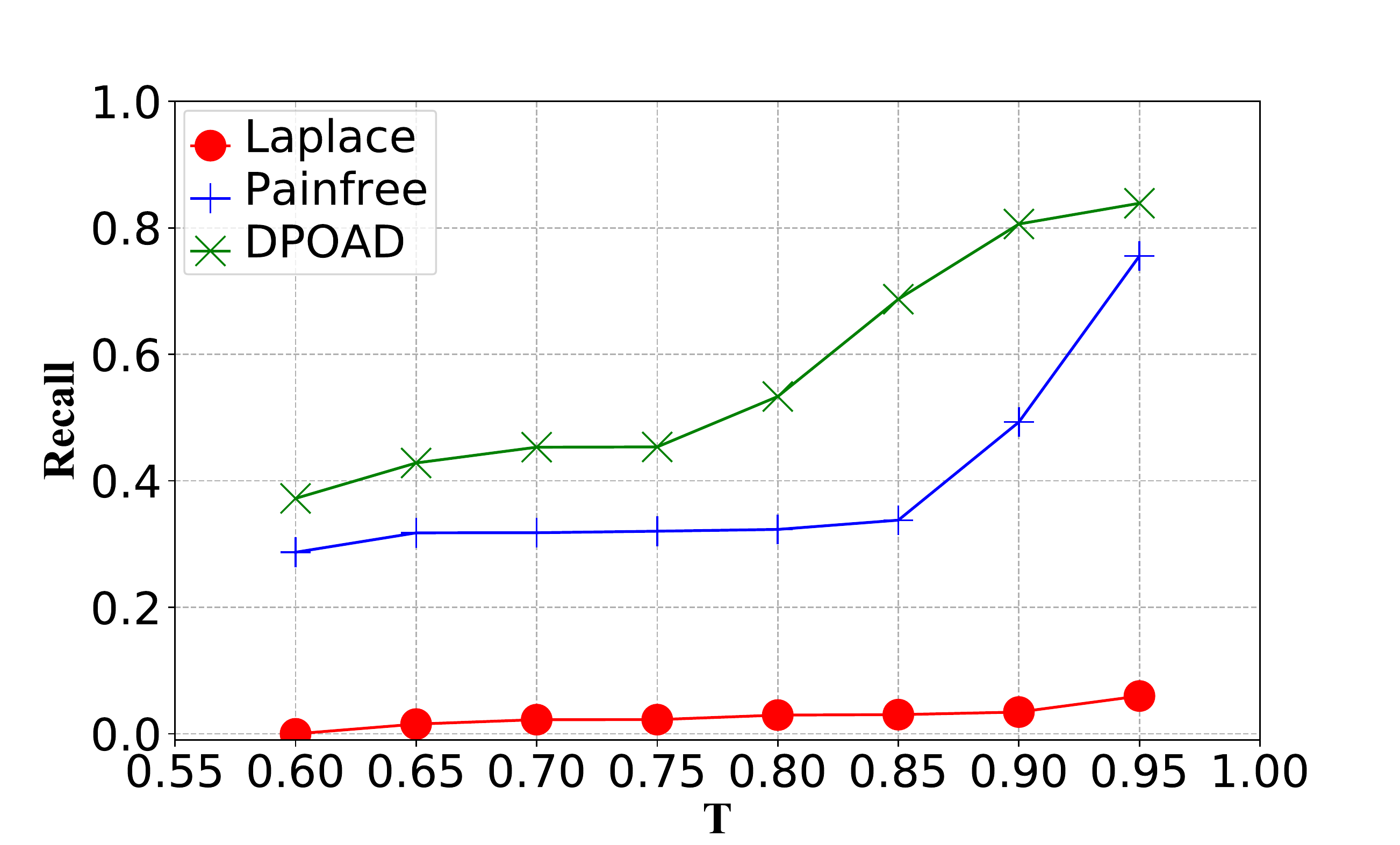}
		\label{fig:2_T_Re}}
		\hspace{-0.22in}
	\subfigure[Precision vs Threshold]{
		\includegraphics[angle=0, width=0.22\linewidth]{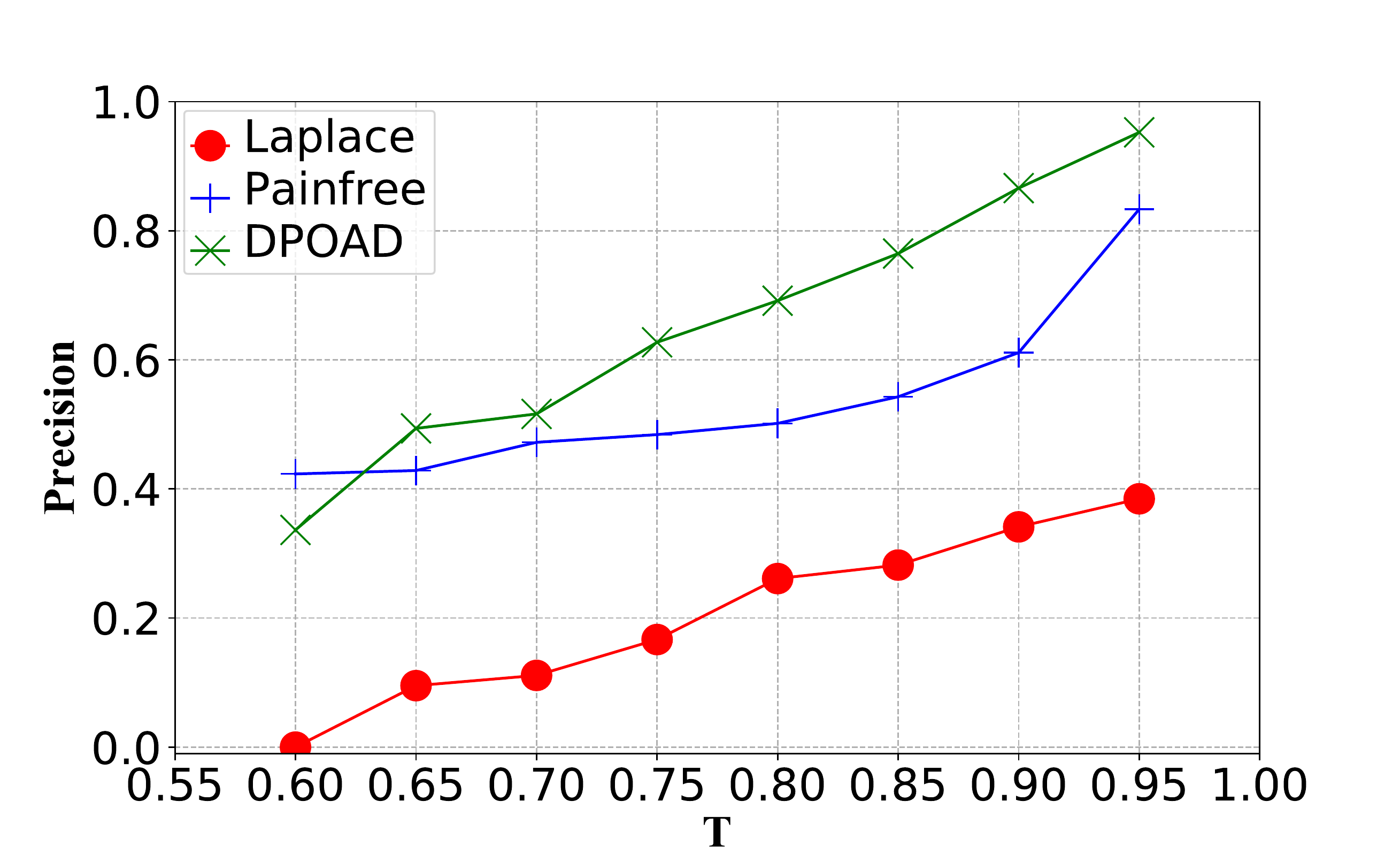}
		\label{fig:4_T_Pre} }
		\hspace{-0.22in}
	\subfigure[Recall vs Threshold]{
		\includegraphics[angle=0, width=0.22\linewidth]{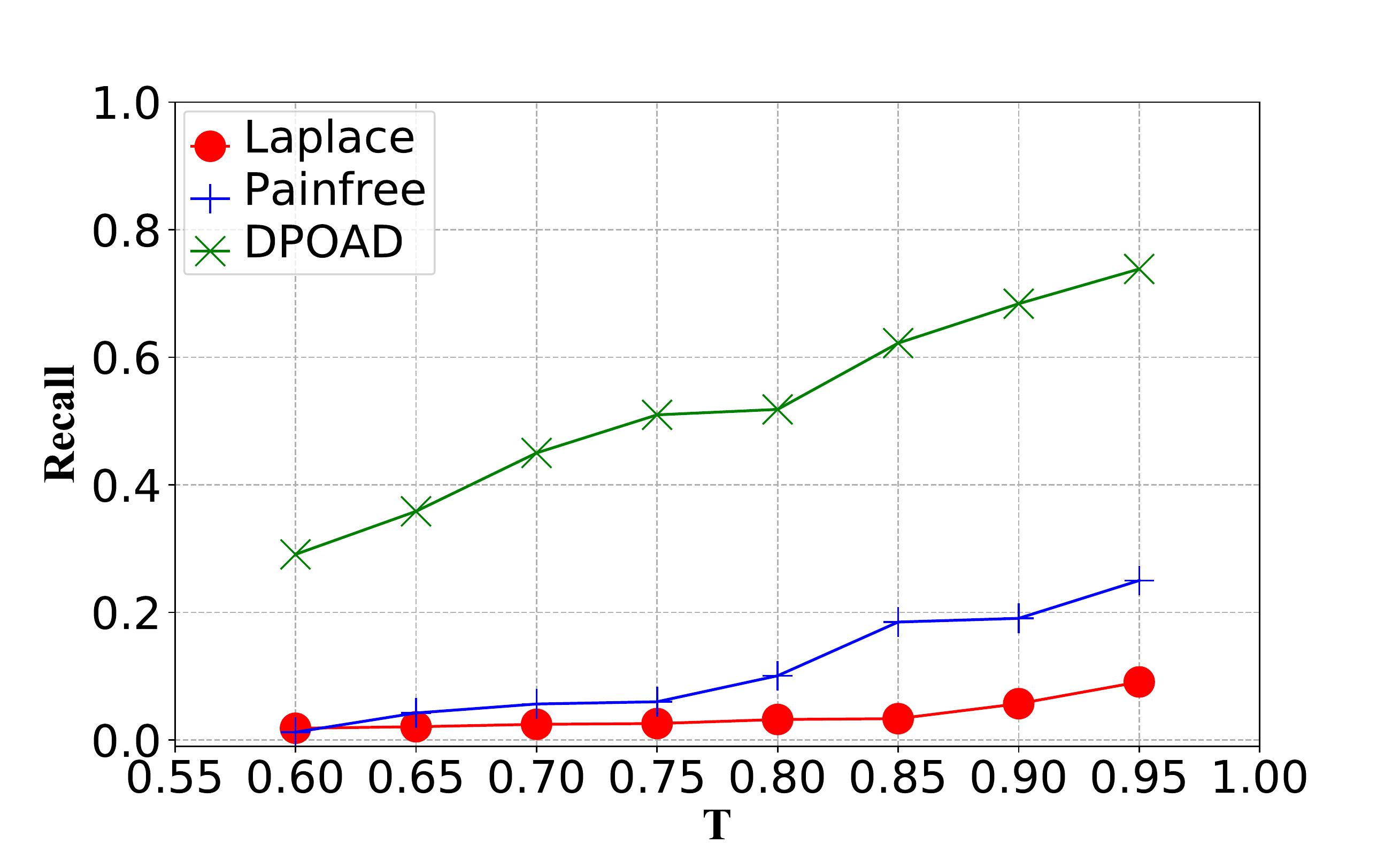}
		\label{fig:4_T_Re}}
		\hspace{-0.22in}
	\subfigure[Precision vs Threshold]{
		\includegraphics[angle=0, width=0.22\linewidth]{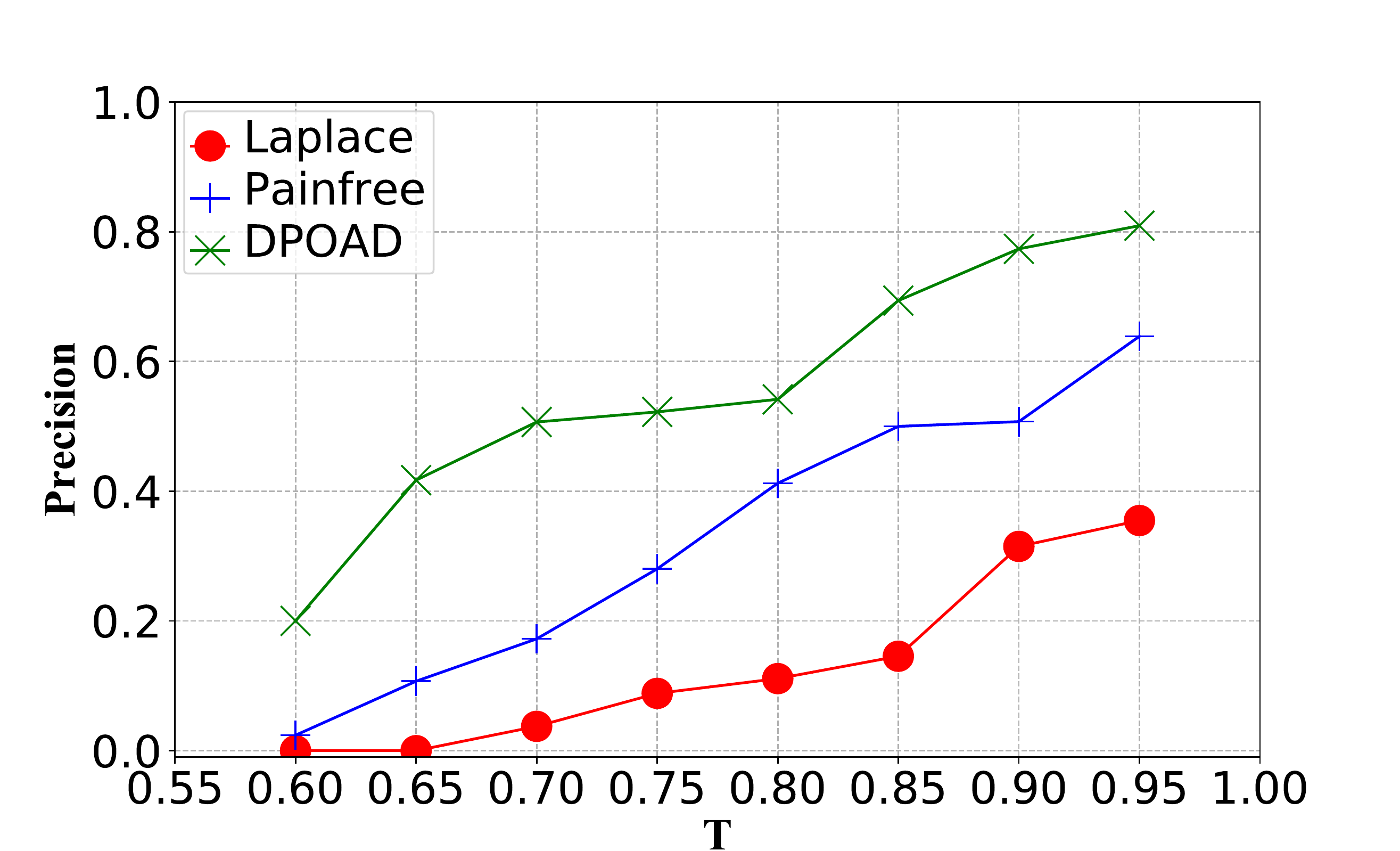}
		\label{fig:5_T_Pre} }
		\hspace{-0.22in}
	\subfigure[Recall vs Threshold]{
		\includegraphics[angle=0, width=0.22\linewidth]{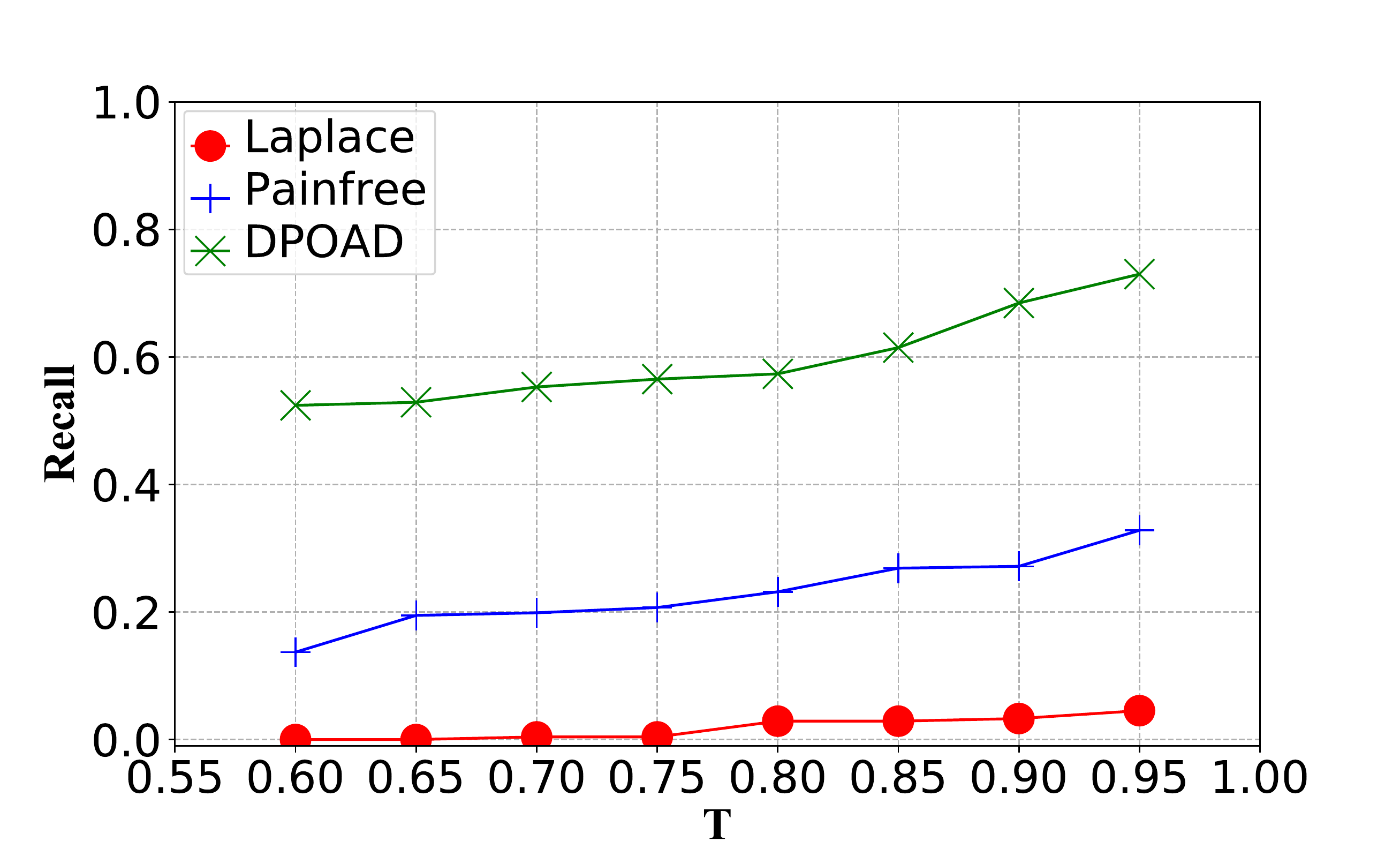}
		\label{fig:5_T_Re}}
	\caption{
	Precision and Recall vs Detection Threshold (\textbf{T}) on the parking dataset (a,b), electric consumption dataset (c,d), credit card clients dataset (e,f) and KDDCup99 dataset (g,h).
	}
	\label{fig:thre}
\end{figure*}

\subsection{Anomaly Detection Parameters} \label{subsec:anamalyP}

This group of experiments is conducted to provide insights about the anomaly detection parameters in the six datasets for DPOAD and two benchmarks. In particular, Figure~\ref{fig:thre} depicts the impact of the chosen threshold on anomaly detection, while Table~\ref{tab:rounds} shows the impact of the number of analysis iterations. 
 
\subsubsection{Accuracy vs Detection Threshold} \label{subsubsec:threshold}

Unlike the anomaly detection, i.e., the ground truth for comparison is fixed, the threshold in this set of experiments also impacts the baseline solution. Thus, the recall and precision in all the experiments grow when the threshold increases. This confirms that DPOAD can apply minor noise in the explicit anomalies, i.e., threshold $\geq$ 0.95, while only applying privacy to normal data. Specifically, with a smaller pool of highly anomalous items (higher threshold), DPOAD has a even better precision and recall on all the datasets. On the contrary, the accuracy of Laplace cannot increase much as the threshold increases.





\begin{table*}[!tbh]
\centering 
\caption{Precision and Recall vs Analysis Iterations.}
\label{tab:rounds}
\scalebox{0.87}{
\begin{tabular}{cc}   
Parking Dataset & Electric Consumption Dataset \\ 
\begin{tabular}{|c|c|c|c|c|c|c|c|}
\hline
&Iteration &1&2&3&4&5&6\\
\hline
\multirow{3}{*}{Precision} & Laplace & 0.00 &
0.05 &
0.13 &
0.18 &
0.24 &
0.25 \\
\cline{2-8}
& Painfree & 0.03 &
0.04 &
0.23 &
0.28 &
0.38 &
0.41
\\
\cline{2-8}
& \textbf{DPOAD} & \textbf{0.20} &
\textbf{0.43} &
\textbf{0.68} &
\textbf{0.80} &
\textbf{0.82} &
\textbf{0.84}
\\
\hline
\hline
\multirow{3}{*}{Recall} & Laplace & 0.00 &
0.00 &
0.01 &
0.07 &
0.08 &
0.11\\
\cline{2-8}
& Painfree & 0.00 &
0.11 &
0.13 &
0.19 &
0.23 &
0.27\\
	
\cline{2-8}
& \textbf{DPOAD} & \textbf{0.10}
& \textbf{0.26} &
\textbf{0.33} &
\textbf{0.45} &
\textbf{0.56} &
\textbf{0.63}\\
\hline
\end{tabular}
&  
\begin{tabular}{|c|c|c|c|c|c|c|c|}
\hline
&Iteration &1&2&3&4&5&6\\
\hline
\multirow{3}{*}{Precision} & Laplace & 0.00 &
0.05 &
0.08 &
0.17 &
0.19 &
0.20 \\
\cline{2-8}
& Painfree & 0.11 &
0.25 &
0.30 &
0.48 &
0.52 &
0.58 \\
	
\cline{2-8}
& \textbf{DPOAD} &
\textbf{0.27} &
\textbf{0.37} &
\textbf{0.47} &
\textbf{0.63} &
\textbf{0.77} &
\textbf{0.83} 
\\
\hline
\hline
\multirow{3}{*}{Recall} & Laplace & 0.00 &
0.01 &
0.01 &
0.01 &
0.03 &
0.03 \\
\cline{2-8}
& Painfree & 0.01 &
0.02 &
0.13 &
0.23 &
0.35 &
0.41\\
	
\cline{2-8}
& \textbf{DPOAD} & \textbf{0.25} &
\textbf{0.42} &
\textbf{0.50} &
\textbf{0.55} &
\textbf{0.58} &
\textbf{0.67}\\
\hline
\end{tabular}
\end{tabular}}
\end{table*}

\begin{table*}[!tbh]
\centering 
\scalebox{0.87}{
\begin{tabular}{cc}
Credit Card Clients Dataset & KDDCup99 Dataset \\ 
\begin{tabular}{|c|c|c|c|c|c|c|c|}
\hline
&Iteration &1&2&3&4&5&6\\
\hline
\multirow{3}{*}{Precision} & Laplace & 0.20 &
0.22 &
0.24 &
0.25 & 
0.28 &
0.32 \\
\cline{2-8}
& Painfree & 0.23 &
0.34 &
0.36 &
0.37 &
0.40 &
0.41 
\\
\cline{2-8}
& \textbf{DPOAD} & \textbf{0.38}&
\textbf{0.44} &
\textbf{0.50} &
\textbf{0.62} &
\textbf{0.74} &
\textbf{0.81} 
\\
\hline
\hline
\multirow{3}{*}{Recall} & Laplace & 0.00 &
0.00 &
0.02 &
0.04 &
0.05 &
0.05 \\
\cline{2-8}
& Painfree & 0.01 &
0.02 &
0.06 &
0.08 &
0.08 &
0.10 
\\
\cline{2-8}
& \textbf{DPOAD} & \textbf{0.21}&
\textbf{0.29} &
\textbf{0.33} &
\textbf{0.40} &
\textbf{0.57} &
\textbf{0.60} \\
\hline
\end{tabular}
&  
\begin{tabular}{|c|c|c|c|c|c|c|c|}
\hline
&Iteration &1&2&3&4&5&6\\
\hline
\multirow{3}{*}{Precision} & Laplace & 0.00 &
0.08 &
0.15 &
0.22 &
0.27 &
0.37 \\
\cline{2-8}
& Painfree & 0.33 &
0.39 &
0.43 &
0.47 &
0.56 &
0.64 
	\\
	
\cline{2-8}
& \textbf{DPOAD} & \textbf{0.42} &
\textbf{0.52} &
\textbf{0.60} &
\textbf{0.71} &
\textbf{0.75} &
\textbf{0.81} 
\\
\hline
\hline
\multirow{3}{*}{Recall} & Laplace & 0.00 &
0.01 &
0.02 &
0.05 &
0.10 &
0.11 \\
\cline{2-8}
& Painfree & 0.04 &
0.05 &
0.06 &
0.06 &
0.10 &
0.17 
	\\
	
\cline{2-8}
& \textbf{DPOAD} & \textbf{0.36} &
\textbf{0.37} &
\textbf{0.43} &
\textbf{0.49} &
\textbf{0.55} &
\textbf{0.67}  \\
\hline
\end{tabular}
\end{tabular}
}
\end{table*}

\subsubsection{Accuracy vs Analysis Iteration} \label{subsubsec:round}

 In this set of experiments, we study the effectiveness of the analysis iterations on each dataset. Recall that the data owner releases the dataset collected between time $t_{i-1}\leq t\leq t_i$ in each iteration $i$. To simulate that, we then add $1000$ new records into the dataset in each iteration. To establish a comparison with Laplace and Pain-Free mechanisms, we implemented those two mechanisms in an interactive manner between the  MSSP and data owner. However, DPOAD would sample the sensitivity in each iteration according to the MSSP-updated PDF, while Laplace and Pain-Free utilize the global sensitivity and uniformly sampled sensitivity, respectively.
 
As shown in Table~\ref{tab:rounds}, for all mechanisms, we first observe that both precision and recall increase with more analysis iterations by the MSSP. However, as the number of iterations increases, the precision and recall of Laplace and Pain-free mechanisms only have minor improvement compared to DPOAD. Thus, we also observe that the DPOAD can achieve a much better accuracy on all the datasets (e.g., $\sim$500\% better recall results than Laplace and Pain-Free on the credit card clients dataset).
 

The main reason is that DPOAD benefits from the MSSP-updated PDF to select the sensitivity in each iteration, the sensitivity should be smaller as the iteration number increases. 
On the contrary, since the sensitivity in Laplace and Pain-Free has to be very large in each iteration, the accuracy of these two benchmarks cannot be high. 

 \subhead{Summary} 
 The above experiments have validated the effectiveness of DPOAD to improve the precision and recall in anomaly detection while ensuring $(\epsilon,\gamma)$-differential privacy. A better accuracy could also be achieved by performing the analysis with more iterations, e.g., 6 analysis iterations in our experiments could achieve as high as 0.84 precision and 0.67 recall by DPOAD (the results can be even better if we relax the privacy protection). All the experiments on different datasets have validated the practicality of DPOAD.

\section{Related Work}
\label{relat}
 
Most existing works recognized the paradox between anomaly detection and differential privacy (DP) due to the indistinguishability property. Okada et al. \cite{DBLP:conf/pkdd/OkadaFS15} first highlights such a conflict. To solve this paradox, a DP relaxation is inevitable, particularly for existence-dependent queries, to ensure a certain level of detection accuracy. Thus, all DP-based works have defined their own relaxed DP notion, which generally make them applicable to limited cases (e.g., Anomaly-restricted privacy \cite{DBLP:conf/cscml/BittnerSW18}, and sensitive privacy \cite{DBLP:conf/ccs/AsifPV19}). In contrast, we define our relaxed DP with the Random Differential Privacy (RDP) \cite{hall2013random} to protect specific percent of data with $\epsilon$-DP.



Furthermore, most of the existing works do not consider outsourcing the anomaly detection task as they mainly rely on the fact that the anomaly detection is locally performed on the data and only the analysis result is provided with DP. Instead, DPOAD performs anomaly detection on the outsourced data (generated by the DP algorithm), which is more practical since the clients do not need to perform expensive analysis.

Okada et al. \cite{DBLP:conf/pkdd/OkadaFS15} introduce a mechanism based on the smooth upper bound of the local sensitivity (a relaxed DP) and use it in a limited setting. The approach does not report the detected outliers but cover two types of DP queries: outlier counting and top-h subspaces with a large number of outliers. Similarly, several other approaches (e.g., \cite{DBLP:conf/cscml/BittnerSW18,DBLP:conf/coinco/AsifTVSA16,DBLP:journals/ijcis/AsifTVSA18,bohler2017privacy}) propose to relax/generalize DP and design anomaly detection methods with DP. Most of such approaches are data-dependent and can only be performed under a specific setting. 
For instance, Bittner et al. \cite{DBLP:conf/cscml/BittnerSW18} defines the anomaly-restricted DP and proposes a group-based search algorithm for this relaxed notion. However, their approach relies on input datasets that have a single outlier, which is different from our approach. Asif et al. \cite{DBLP:conf/coinco/AsifTVSA16,DBLP:journals/ijcis/AsifTVSA18} considers DP in the collaborative outlier detection where data is either vertically or horizontally distributed among multiple parties. However, such approach works only for datasets with certain characteristics (i.e., categorical data) and its extension to numeric data using discretization can limit the types of detectable outliers with reduced data usability. In contrast to them, we consider a more general and practical model including any number of anomalies.

Moreover, Asif et al. \cite{DBLP:conf/ccs/AsifPV19} generalizes DP and defines sensitive privacy, which determines sensitive records after quantifying the database.  
B{\"o}hler et al. \cite{bohler2017privacy} assumes that the data owner already knows the outliers and excludes them from the dataset before adapting the sensitivity to the remaining data. It also relies on the trust between the analyst and a new entity called the correction server whose role is to increase the accuracy of the outlier detection. 

\vspace{-0.05in}
\section{Conclusion}
\label{sec:conclusion}
We propose a novel solution called DPOAD to address the problem of privacy-preserving outsourcing of anomaly detection by decreasing the privacy of the anomalies through communication between the data owners and the data analysts. We show that DPOAD can significantly improve the accuracy of the analysis for the data with abnormal behaviour while preserving the privacy of the data with normal behavior. The system receives the input from the data owners and the data analysts. Each data owner, to build a reliable estimation of the sensitivity, releases sufficient number of DP version of their dataset (non suitable for anomaly detection) to analysts. The analyst provides owners with distribution of their data which can be used to estimate an updated sensitivity value using the calculated anomaly scores for records. The framework can be used in settings with one or more data owners. We formally benchmark DPOAD under DP-Histogram publishing and showcase the performance improvements.

\bibliographystyle{unsrt}
\bibliography{acmart}



\end{document}